\documentclass[11pt]{article}

\usepackage[margin=1in]{geometry}
\usepackage{amsmath,amssymb,amsthm,mathtools}
\usepackage{enumitem}
\usepackage{microtype}
\usepackage{hyperref}
\hypersetup{hidelinks}

\newtheorem{theorem}{Theorem}[section]
\newtheorem{lemma}[theorem]{Lemma}
\newtheorem{proposition}[theorem]{Proposition}
\newtheorem{corollary}[theorem]{Corollary}

\theoremstyle{definition}

\theoremstyle{remark}
\newtheorem{remark}[theorem]{Remark}

\newcommand{\SC}{\operatorname{SC}}
\newcommand{\plu}{\operatorname{plu}}
\newcommand{\IV}{\operatorname{IV}}
\newcommand{\ML}{\operatorname{ML}}
\newcommand{\MIV}{\operatorname{MIV}}
\newcommand{\dist}{\operatorname{dist}}
\newcommand{\supp}{\operatorname{supp}}

\newcommand{\eps}{\varepsilon}

\title{Robust Lottery Compression for Metric Voting:\\
A Transfer Principle for Bounded Randomness}
\author{
Jianhao Jia\thanks{Shanghai University of Finance and Economics.
Email: \texttt{jianhao.jia@stu.sufe.edu.cn}}
\and
Bo Peng\thanks{Shanghai University of Finance and Economics.
Email: \texttt{ahqspbo@gmail.com}}
}
\date{}

\begin{document}
\maketitle

\begin{abstract}
We study metric distortion in randomized social choice under \emph{bounded randomness}: on every preference profile, the voting rule must deterministically identify a multiset of \(K\) candidates and then select a uniformly random entry. Previous work showed that this restricted model can beat the optimal deterministic distortion of \(3\). We show that it can in fact approach the current best unrestricted upper benchmark of \(5/2\). For every integer \(K\ge 802\), there exists a bounded-randomness rule with distortion at most
\[
\frac{5}{2}
+3\left(\frac{\pi}{8K}\right)^{1/3}
+2\sqrt{\frac{\pi}{8K}}.
\]
Consequently, \(O(\varepsilon^{-3})\) entries suffice for distortion \(5/2+\varepsilon\), independently of the numbers of voters and candidates. We also show that \(164\) entries already achieve distortion strictly below \(3\), giving \(2\le N^\star\le 164\) for the minimum list size needed to break the deterministic barrier. Our main technical contribution is a dimension-free compression theorem: if a lottery has distortion at most \(\rho\) and every candidate in its support has deterministic distortion at most \(H\), then it admits a uniform \(K\)-entry approximation with distortion at most \(\rho+(H+1)\sqrt{\pi/(8K)}\). Thus, lotteries whose possible outcomes are already well behaved incur only \(O(K^{-1/2})\) compression loss. Mixed Integrated Veto does not satisfy this support condition, so we first remove early-eliminated outcomes, trading $O(\tau^2)$ distortion loss for an $O(1/\tau)$ bound on the
deterministic distortion of every supported candidate. Balancing this repair cost against compression yields the \(O(K^{-1/3})\) convergence rate.
\end{abstract}

\section{Introduction}
\label{sec:intro}
Randomization is one of the few general ways in which a voting rule can improve upon the welfare guarantees of deterministic social choice. In the standard model, however, a randomized rule may output an arbitrary probability distribution over the candidates. Such a lottery may have large support and may use finely tuned, nonuniform probabilities. This mathematical freedom is useful, but it leaves open how much probabilistic complexity is actually needed to obtain the benefits of randomization.

We study a particularly simple alternative. On every preference profile, the voting rule deterministically identifies a multiset of \(K\) candidates, where repetitions are allowed, and then selects a uniformly random entry. Repetitions encode unequal probabilities through multiplicity, but the entire lottery is represented by an explicit \(K\)-entry list. Most importantly, \(K\) is independent of the numbers of voters and candidates. Following Cai, Gao, Ramakrishnan, and Wang~\cite{CGRW26}, we call this model \emph{bounded randomness}.

The restriction is severe. A general randomized rule can assign an arbitrary probability to every candidate, whereas a bounded-randomness rule must expose all of its possible outcomes in a short list. This makes the resulting lottery transparent and easy to describe, but it is not clear whether such simplicity is compatible with the strongest welfare guarantees known for randomized voting.

We investigate this question in the metric distortion framework. Voters and candidates are embedded in an unknown metric space, and each voter ranks the candidates by increasing distance. A voting rule observes only the ordinal preference profile and seeks to minimize social cost. The distortion of a randomized voting rule is the worst-case ratio, over all preference profiles and all consistent metrics, between the expected social cost of its output and the minimum social cost of any candidate.

The optimal deterministic distortion is \(3\)~\cite{ABP18,GHS20}. Randomization can do strictly better. Charikar and Ramakrishnan established the current best lower bound of \(2.1126\) for unrestricted randomized rules~\cite{CR22}. On the upper-bound side, Charikar, Ramakrishnan, Wang, and Wu were the first to obtain a constant improvement over \(3\), achieving distortion \(2.75271\)~\cite{CRWW24}. More recently, Frank and Ye independently showed that an equal mixture of a Maximal Lottery and a lottery obtained by averaging candidate scores over the Simultaneous Veto process has distortion \(5/2\)~\cite{Frank26,Ye26}. Frank calls the second component \emph{Integrated Veto} and the resulting rule \emph{Mixed Integrated Veto}; we adopt this terminology.

Cai et al.~\cite{CGRW26} initiated the study of metric distortion under bounded randomness. They proved that a uniform lottery over a constant-size, deterministically identified multiset already achieves distortion strictly below \(3\). Their result shows that arbitrary probability vectors are not necessary to separate randomized voting from deterministic voting. They also ask for the smallest list size \(N\) that suffices to beat \(3\), and conjecture that \(N=2\) may already be enough.

Their work raises a broader question. Is bounded randomness useful only for crossing the deterministic barrier, or can a short uniform list preserve the substantially stronger guarantees of unrestricted randomized voting?

\begin{quote}
\emph{Can bounded randomness approach the current best unrestricted distortion guarantee? More quantitatively, how does the achievable distortion depend on the list size \(K\)?}
\end{quote}

We give a quantitative affirmative answer. Bounded randomness approaches the current unrestricted upper benchmark \(5/2\), with a list size depending only on the desired accuracy. We also obtain an explicit answer to the finite-list question: \(164\) entries already suffice to achieve distortion strictly below \(3\). The next subsection states these results and explains the general compression principle behind them.

\subsection{Our Results}

Our contribution has three parts. First, we give an explicit
list-size--distortion tradeoff approaching \(5/2\). Second, we prove a
general theorem identifying when an unrestricted lottery can be replaced
by a short uniform list. Third, a sharper finite-sample analysis yields
the concrete bound \(N^\star\le 164\).

Our main quantitative result is the following.

\begin{theorem}\label{thm:main-K}
For every integer \(K\ge 802\), there exists a voting rule that
deterministically identifies a multiset of \(K\) candidates and uniformly
randomizes over it, with metric distortion at most
\[
\frac52
+3\left(\frac{\pi}{8K}\right)^{1/3}
+2\sqrt{\frac{\pi}{8K}}.
\]
\end{theorem}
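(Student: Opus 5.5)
We combine the compression theorem described in the abstract with a repaired version of Mixed Integrated Veto (MIV). The compression theorem states that a lottery with distortion at most \(\rho\), all of whose support candidates have deterministic distortion at most \(H\), admits a uniform \(K\)-entry approximation with distortion at most \(\rho+(H+1)\sqrt{\pi/(8K)}\). Taken as a black box, it reduces Theorem~\ref{thm:main-K} to building, for each threshold \(\tau\in(0,1)\), a lottery that is a deterministic function of the profile and satisfies two conditions: its distortion is at most \(5/2+c_1\tau^2\), and each support candidate has deterministic distortion at most \(c_2/\tau\) for explicit constants. The list is then the deterministic output of the compression step, for instance a rounding or derandomized selection. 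The compression theorem must produce this list as a function of the profile alone, with its guarantee holding uniformly over all consistent metrics. I expect the paper's statement of that theorem to provide exactly this.

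\emph{Repair step.} The MIV lottery averages a Maximal Lottery with Integrated Veto. Integrated Veto integrates over the Simultaneous Veto elimination time, so candidates eliminated early may receive mass and may be far from optimal. I would truncate the Integrated Veto component: discard mass on candidates eliminated before time \(\tau\) and renormalize, or equivalently integrate the veto scores only over \([\tau,1]\). The Maximal Lottery component also needs control. I would either mix it with the truncated veto part, or restrict to candidates that beat or tie an appropriate set, so that every support candidate has bounded deterministic distortion.

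For the support bound, I would use the standard argument: a candidate surviving simultaneous veto until time \(\tau\) retains a \(\tau\)-fraction of voter weight and so has deterministic distortion \(O(1/\tau)\). For the distortion loss, I would rerun the \(5/2\) analysis of Frank/Ye and track the discarded \([0,\tau)\) window. The veto-score integrand is bounded on the discarded window, and the integral there is of order \(\tau^2\). Calibrating constants should give the form \(\rho\le 5/2+a\tau^2\) and \(H+1\le b/\tau\).

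\emph{Optimization.} With \(\delta=\sqrt{\pi/(8K)}\), the bound becomes \(5/2+a\tau^2+(b/\tau)\delta+\delta\). Choosing \(\tau\asymp\delta^{1/3}\) gives a loss of order \(\delta^{2/3}=(\pi/(8K))^{1/3}\), and the target coefficient \(3\) indicates \(a=1\), \(b=2\), with \(\tau=\delta^{1/3}\). Then \(\tau^2+2\delta/\tau=3\delta^{2/3}\). The leftover \(\delta\) term together with any constant in \(H\), for example \(H\le 2/\tau+\text{const}\), accounts for the second term \(2\delta\). The condition \(K\ge 802\) is presumably what guarantees \(\tau\) lies in the admissible range, e.g.\ \(\tau\le 1/4\) or some threshold required by the repair lemma; I would verify this numerically at the end.

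The main obstacle is the repair lemma itself: showing that the truncation costs only \(O(\tau^2)\) rather than \(O(\tau)\), and does so with the exact constants \(1\) and \(2\). To attack it, I would look for the \(5/2\) certificate as a linear combination of per-time veto inequalities, and show that the contribution from \([0,\tau)\) is itself \(O(t)\) at time \(t\), which integrates to \(\tau^2/2\) per side.
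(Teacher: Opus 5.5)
Your architecture matches the paper's: delete early Integrated-Veto mass, bound support distortion by $O(1/\tau)$ and the loss by $O(\tau^2)$, compress, then set $\tau=\delta^{1/3}$. However, you reverse-engineer the constants $a=1$, $b=2$ from the target rather than prove them, and both ingredients that produce them are missing.

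\textbf{The Maximal Lottery component.} Mixing it with the truncated veto part does nothing for its support. Restricting it to some subset of candidates would jeopardize the equilibrium bound $\ell_I(p^{\ML})\le\min\{q(I),1/2\}$ on which the $5/2$ analysis rests. The missing fact is that no modification is needed: every candidate in the support of a Maximal Lottery already has deterministic distortion at most $4+\sqrt{17}$ (Cai et al.). Hence $H_\tau=\max\{4+\sqrt{17},\,1+2/\tau\}$. The hypothesis $K\ge802$ is precisely the condition $\tau=(\pi/(8K))^{1/6}\le 2/(3+\sqrt{17})\approx 0.2808$ under which $H_\tau=1+2/\tau$, so that $(H_\tau+1)\delta=2\delta/\tau+2\delta$. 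The threshold does not come from the repair step, which only needs $\tau^2\le 1/2$.

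The survivor bound must also be exactly $1+2/\tau$, and your heuristic that a survivor \emph{retains a $\tau$-fraction of voter weight} is not the reason; its score at time $\tau$ can be arbitrarily small. The correct argument: if $j\in I^c$ is active on $[0,\tau)$, every voter counted in $s_{I\succ j}$ vetoes inside $I^c$. Writing $B_I(t)$ for the veto rate on $I^c$, this gives $\tau s_{I\succ j}\le\int_0^\tau B_I(t)\,dt\le\plu(I^c)\le q(I)$. The certificate with $\lambda=1/\tau$ then yields $1+2/\tau$.

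\textbf{The repair lemma.} You leave this as the main obstacle, but it needs no time-decomposition of the $5/2$ certificate. Your \emph{equivalent} variant is not equivalent: integrating scores only over $[\tau,1]$ discards mass $2\int_0^\tau(1-t)\,dt=2\tau-\tau^2$. The normalization argument then gives only an $O(\tau)$ loss and a $K^{-1/4}$ rate.

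Your first option, deleting whole candidates with $T_j<\tau$, is the right one, and it goes through as follows:
\begin{itemize}
\item Since $\ell_I$ is linear with nonnegative coefficients, deletion can only decrease it. So Frank's bound $\ell_I\le q(I)^2$ holds unchanged for the surviving subdistribution.
\item The deleted mass satisfies $m_\tau\le\tau^2$, because Fubini gives $p^{\IV}(j)=2\int_0^{T_j}s\,b_j(s)\,ds$ and $\sum_j b_j(s)=1$.
\item Normalize only after adding the full Maximal Lottery: $D_\tau=(p^{\IV,\ge\tau}+p^{\ML})/(2-m_\tau)$. The scalar inequality $q^2+\min\{q,1/2\}\le\frac32 q$ then gives $\ell_I(D_\tau)\le\frac{3}{2(2-m_\tau)}q(I)$.
\item Hence $\dist(D_\tau,P)\le 1+\frac{3}{2-\tau^2}\le\frac52+\tau^2$ whenever $\tau^2\le\frac12$. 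This is where $a=1$ comes from.
\end{itemize}
With these two facts, your optimization with $\tau=\delta^{1/3}$ goes through verbatim.
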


In particular, the distortion converges to \(5/2\) as the list size grows.
Equivalently, a list of size \(O(\eps^{-3})\), independent of the numbers
of voters and candidates, suffices to obtain distortion \(5/2+\eps\).

\begin{theorem}\label{thm:main-eps}
For every constant \(\eps>0\), there exists an integer \(K=K(\eps)\),
independent of the numbers of voters and candidates, and a voting rule
such that
\begin{enumerate}[label=(\roman*)]
\item on every preference profile, the rule deterministically identifies
a multiset of \(K\) candidates and then selects a uniformly random entry
of the multiset;
\item the metric distortion of the rule is at most \(\frac52+\eps\).
\end{enumerate}
Moreover, for
\(\bar\eps=\min\left\{\eps,\frac15\right\}\), it suffices to take
$K\ge
\left\lceil
\frac{27\pi}{8\bar\eps^3}
\left(1+\sqrt{\frac{\bar\eps}{3}}\right)^2
\right\rceil$.
For each fixed \(\eps\), the multiset can be found in polynomial time.
\end{theorem}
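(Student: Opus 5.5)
Theorem~\ref{thm:main-eps} will follow from Theorem~\ref{thm:main-K} by a choice of \(K\), together with a check that the construction behind Theorem~\ref{thm:main-K} runs in polynomial time. Set \(\bar\eps=\min\{\eps,1/5\}\). Since distortion at most \(5/2+\bar\eps\) implies distortion at most \(5/2+\eps\), it suffices to treat \(\bar\eps\le 1/5\). Write \(x=(\pi/(8K))^{1/3}\), so that \(\sqrt{\pi/(8K)}=x^{3/2}\). Then Theorem~\ref{thm:main-K} gives distortion at most
\[
\frac52+3x+2x^{3/2}.
\]
We want \(3x+2x^{3/2}\le\bar\eps\). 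Rather than solve this cubic in \(\sqrt x\) exactly, we look for a sufficient condition matching the stated threshold. The threshold
\[
K\ge \frac{27\pi}{8\bar\eps^3}\Bigl(1+\sqrt{\bar\eps/3}\Bigr)^2
\]
rearranges to
\[
x\le \frac{\bar\eps}{3}\Bigl(1+\sqrt{\bar\eps/3}\Bigr)^{-2/3}.
\]

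\textbf{The inequality.} Put \(t=\sqrt{\bar\eps/3}\), so \(\bar\eps=3t^2\), and write \(x=t^2 s\) with \(s\le (1+t)^{-2/3}\). The target \(3x+2x^{3/2}\le 3t^2\) becomes \(3s+2t s^{3/2}\le 3\). Since the left side is increasing in \(s\), it suffices to check the extreme value \(s=(1+t)^{-2/3}\), which gives the condition
\[
3(1+t)^{-2/3}+2t(1+t)^{-1}\le 3.
\]
This is an elementary one-variable inequality for \(t\in(0,\sqrt{1/15}\,]\). One handles it by bounding \((1+t)^{-2/3}\le 1-\tfrac23 t+\tfrac59 t^2\), which reduces the claim to a polynomial inequality in \(t\). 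This step is the main, though minor, obstacle: the constant in the threshold is tuned, and I may need a sharper expansion or a different grouping (for instance using \((1+t)^{-1}\le 1-t+t^2\) as well). If the clean form fails, I would instead verify monotonicity of the difference directly, checking that it is \(0\) at \(t=0\) with nonpositive derivative on the range.

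\textbf{Applicability of Theorem~\ref{thm:main-K}.} We must also ensure \(K\ge 802\). For \(\bar\eps\le 1/5\), the threshold is at least \(27\pi\cdot 125/8\approx 1325\), so the hypothesis \(K\ge 802\) holds automatically. This is exactly why \(\eps\) is capped at \(1/5\).

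\textbf{Polynomial time.} For fixed \(\eps\), \(K\) is a constant. The rule from Theorem~\ref{thm:main-K} is obtained in three stages: computing the Mixed Integrated Veto lottery, repairing it by removing early-eliminated outcomes, and compressing via the dimension-free compression theorem. I would argue each stage is polynomial. The Maximal Lottery is computed by an LP, and Integrated Veto by simulating the simultaneous veto process. For the compression, the existence proof is probabilistic, so it must be derandomized to deterministically identify the multiset. Two routes are possible: the method of conditional expectations on a pessimistic estimator, or the observation that the distortion-certifying constraints are finitely many linear functions of the empirical distribution, so a good \(K\)-list can be found by rounding with an LP. Exhaustive search over the \(m^K\) multisets, checking distortion via an LP for each, is also polynomial for constant \(K\). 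This last route is simplest, provided distortion of a given lottery on a profile can be certified by an LP, or by a polynomial-time checkable sufficient condition, which the compression proof presumably supplies.
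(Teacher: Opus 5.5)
Your proposal is correct, but it takes a different route from the paper. The paper does not go through Theorem~\ref{thm:main-K}. It returns to Theorem~\ref{thm:master} and tunes the truncation time to the accuracy rather than to $K$, taking $\tau=\sqrt{\bar\eps/3}$:
\begin{itemize}
\item The cap $\bar\eps\le 1/5$ gives $\tau<2/(3+\sqrt{17})$, so $H_\tau=1+2/\tau$.
\item The repair cost is $\frac{3\tau^2}{2(2-\tau^2)}\le\bar\eps/3$.
\item The stated threshold on $K$ is literally the condition $\sqrt{\pi/(8K)}\le\bar\eps\tau/(3(1+\tau))$, i.e.\ $(2+2/\tau)\sqrt{\pi/(8K)}\le 2\bar\eps/3$.
\end{itemize}
So the paper needs no auxiliary inequality. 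You instead need the scalar inequality $3(1+t)^{-2/3}+2t/(1+t)\le 3$ to convert the $K$-tuned bound $\frac52+3x+2x^{3/2}$ into the $\eps$-form. In exchange, your route lets one $K$-indexed family of rules serve every $\eps$ at once, and makes the theorem a pure corollary of Theorem~\ref{thm:main-K}. In your route the cap $\bar\eps\le 1/5$ serves to guarantee $K\ge 802$. That is the same underlying condition ($\tau$ below $2/(3+\sqrt{17})$), expressed for the $K$-tuned $\tau$.

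Two remarks on details. First, your primary plan for the scalar inequality fails near the top of the range. With $(1+t)^{-2/3}\le 1-\frac23t+\frac59t^2$ and $2t/(1+t)$ kept exact, the claim reduces to $\frac53\le\frac{2}{1+t}$, i.e.\ $t\le 1/5$. But you need $t$ up to $\sqrt{1/15}\approx 0.258$. Your fallback is fine: the difference $f(t)=3(1+t)^{-2/3}+2t/(1+t)-3$ vanishes at $0$, and $f'(t)=2(1+t)^{-2}-2(1+t)^{-5/3}\le 0$ for $t\ge0$. Even simpler, multiplying through by $1+t$ turns the inequality into $(1+t)^{1/3}\le 1+t/3$, which holds for all $t\ge 0$.

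Second, the polynomial-time claim needs neither derandomization nor an LP. The deterministic selection in the proof of Theorem~\ref{thm:master} enumerates the $K$-multisets from $\supp(D_\tau)$. It takes the lexicographically first minimizer of the average one-sided prefix discrepancy, which is computable in polynomial time per multiset. This is exactly the checkable sufficient condition you anticipated, and it gives $|C|^{O(K)}\operatorname{poly}(|V|,|C|)$ time overall.
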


The polynomial-time guarantee is for every fixed accuracy parameter; the
degree of the polynomial may depend on \(K(\eps)\).

\paragraph{Uniform-support compression.}

Low expected distortion alone does not guarantee that a lottery can be
compressed. A source lottery may assign very small probability to a
candidate of extremely large deterministic distortion, whereas every
candidate appearing in a \(K\)-entry uniform list receives probability
at least \(1/K\).

Suppose a lottery \(D\) has distortion at most \(\rho\), and every
candidate in \(\supp(D)\) has deterministic distortion at most \(H\),
uniformly over all metrics consistent with the profile. We prove that,
for every \(K\ge1\), there exists a uniform lottery over a \(K\)-element
multiset with distortion at most
\begin{equation}\label{eq:intro-compression}
\rho+(H+1)\sqrt{\frac{\pi}{8K}}.
\end{equation}

This is the Uniform-Support Compression Theorem proved in
Section~\ref{sec:compression}. Its bound is independent of the numbers of
voters and candidates. Thus, when \(H\) is an absolute constant, the
compression loss is \(O(K^{-1/2})\). We also prove an exact finite-sample
version, which is used in Theorem~\ref{thm:164}.

\paragraph{Direct compression of the \(2.75271\) benchmark.}

Cai et al.~\cite{CGRW26} observe that every candidate supported by the
\(2.75271\)-distortion rule of Charikar, Ramakrishnan, Wang, and
Wu~\cite{CRWW24} has deterministic distortion bounded by an absolute
constant. Therefore, for some absolute constant
\(C_{\mathrm{CRWW}}>0\), our compression theorem gives
\begin{equation}\label{eq:crww-direct-intro}
\dist
\le
2.75271+\frac{C_{\mathrm{CRWW}}}{\sqrt K}.
\end{equation}
Hence \(O(\eps^{-2})\) entries suffice to preserve this benchmark up to
additive error \(\eps\).

\paragraph{Robustifying Mixed Integrated Veto.}

Mixed Integrated Veto achieves the sharper benchmark \(5/2\), but it
cannot be compressed directly because its Integrated-Veto component may
support candidates of arbitrarily large deterministic distortion.

We therefore remove the Integrated-Veto probability assigned to
candidates eliminated before a threshold time \(\tau\). The removed mass
is at most \(\tau^2\), while every surviving candidate has deterministic
distortion at most \(1+2/\tau\). Combining the surviving mass with a
Maximal Lottery produces a source lottery whose distortion is at most
$1+\frac{3}{2-\tau^2}$
and whose supported candidates have deterministic distortion
\(O(1/\tau)\).

Applying the compression theorem therefore gives excess distortion
$O(\tau^2)+O\left(\frac{1}{\tau\sqrt K}\right)$.
Choosing \(\tau=\Theta(K^{-1/6})\) yields $\dist=\frac52+O(K^{-1/3})$.
Tracking the constants with
\(\tau=(\pi/(8K))^{1/6}\) gives Theorem~\ref{thm:main-K}.

This also explains the two convergence rates:
\[
2.75271+O(K^{-1/2})
\qquad\text{and}\qquad
\frac52+O(K^{-1/3}).
\]
The first is the cost of directly compressing an already robust lottery.
The second also includes the cost of robustifying Mixed Integrated Veto.

\paragraph{A concrete list breaks the deterministic barrier.}

Let \(N^\star\) be the minimum integer \(N\) for which some voting rule
uniformly randomizing over a deterministically identified \(N\)-element
multiset has distortion strictly smaller than \(3\).

\begin{theorem}\label{thm:164}
There exists a voting rule that deterministically identifies a multiset
of \(164\) candidates and uniformly randomizes over it, with metric
distortion strictly smaller than \(3\). Consequently,
\(2\le N^\star\le164\).
\end{theorem}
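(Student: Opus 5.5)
The plan is to take the robustified Mixed Integrated Veto lottery as the source and apply an exact finite-sample version of the Uniform-Support Compression Theorem to it, rather than the asymptotic bound (\ref{eq:intro-compression}). The asymptotic bound is too crude at this list size. With source distortion \(1+3/(2-\tau^2)\) and support bound \(H=O(1/\tau)\), the condition \(1+3/(2-\tau^2)+(H+1)\sqrt{\pi/(8K)}<3\) cannot be met at \(K=164\). So the constant \(\sqrt{\pi/8}\) must be replaced by the true expected deviation of a binomial sample.

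Fix a profile and the threshold \(\tau\). Let \(D\) be the source lottery: an equal mixture of the Maximal Lottery and the surviving Integrated-Veto mass, renormalized. Draw \(K\) i.i.d. samples \(X_1,\dots,X_K\sim D\).

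For any consistent metric and optimal candidate \(c^\ast\), write \(f(c)=\SC(c)/\SC(c^\ast)\in[1,H]\). The distortion of the uniform list is \(\frac1K\sum_i f(X_i)\). The compression argument bounds the worst case over metrics of this average minus \(\rho\). Following the proof of the compression theorem, I would reduce this to a sum over voter-level or threshold-level quantities of the form \(\mathbb{E}\bigl|\hat p-p\bigr|\), where \(\hat p\) is an empirical frequency of \(K\) Bernoulli\((p)\) draws. The exact version replaces \(\sqrt{p(1-p)/K}\le 1/(2\sqrt K)\) by
\[
\beta_K=\max_{p\in[0,1]}\mathbb{E}\bigl|\mathrm{Bin}(K,p)/K-p\bigr|,
\]
which has a closed form (de Moivre's mean absolute deviation formula). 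The conclusion is that some multiset attains distortion at most \(\rho+(H+1)\beta_K\) (or the exact analogue from the finite-sample theorem). Existence then follows by averaging. Choosing the lexicographically first optimal multiset makes the rule deterministic, so it is a valid bounded-randomness rule.

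The key steps are as follows.
\begin{enumerate}[label=(\arabic*)]
\item Recall from the robustification section the source distortion \(\rho(\tau)=1+3/(2-\tau^2)\) and the support bound \(H(\tau)\le 1+2/\tau\). Possibly sharpen the latter using the Maximal Lottery component, whose support candidates are uncovered and hence have distortion at most \(3\).
\item Plug these into the exact finite-sample compression bound \(\rho(\tau)+(H(\tau)+1)\beta_K\).
\item Evaluate \(\beta_{164}\) exactly and numerically choose \(\tau\) so that \(1+3/(2-\tau^2)+(2+2/\tau)\beta_{164}<3\). This requires \(\tau^2\) small but \(\tau\) not tiny; I would take \(\tau\approx(\beta_K)^{1/3}\) as the starting guess and tune it.
\end{enumerate}
The lower bound \(N^\star\ge 2\) is trivial, since \(N=1\) is a deterministic rule and the optimal deterministic distortion is \(3\).

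The main obstacle is the numerics. With the crude constants, \(K=164\) may still fail, so the exact finite-sample theorem must be genuinely sharper than \(\sqrt{\pi/(8K)}\) at small \(K\). I would try two refinements in turn. First, exploit that the deviation loss only multiplies \(H-1\) rather than \(H+1\), because \(f\ge 1\) pointwise. Second, split the source into its two components and allot list entries proportionally, giving say \(82\) entries to each. The Maximal Lottery part then has support bound \(3\) and incurs small loss, while only the Integrated-Veto part carries the \(O(1/\tau)\) factor. Verifying the final inequality with the explicit binomial expression at \(K=164\) is then a finite computation.
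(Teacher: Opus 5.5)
There is a genuine gap: the exact constant you plug in is the wrong one. You replace $\sqrt{\pi/(8K)}$ by $\beta_K=\max_p\mathbb{E}\bigl|\mathrm{Bin}(K,p)/K-p\bigr|$, which is the error of a \emph{single} empirical frequency. The compression argument cannot use a per-event bound. The multiset is fixed first, and only afterwards does the adversary choose the consistent metric. That metric determines the threshold sets $I_t$, and hence which prefix of each voter's ranking is measured at each $t$.

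The quantity that must be small in expectation is therefore $\frac1{|V|}\sum_v\max_r\bigl(\widehat F_v(r)-F_v(r)\bigr)$, with the maximum inside the expectation. Under the interval coupling, each maximum is dominated by a one-sided Kolmogorov--Smirnov statistic. The correct exact constant is $\kappa_K=\mathbb{E}[D_K^+]$, which the paper evaluates from Smirnov's formula. Note that $\sqrt{\pi/(8K)}$ is not a relaxation of $\sqrt{p(1-p)/K}$; it is the integrated one-sided DKW tail bound on $\mathbb{E}[D_K^+]$. Numerically, $\beta_{164}\approx0.031$ whereas $\kappa_{164}\approx0.04794$. So verifying $\rho+(H+1)\beta_{164}<3$ would not establish the theorem: its extra slack comes from interchanging the supremum and the expectation.

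Your support bound is also mishandled. $\supp(D_\tau)$ contains the Maximal-Lottery support, and those candidates are only known (Cai et al.) to have deterministic distortion at most $4+\sqrt{17}$. Your claim that they have distortion at most $3$ is false; being uncovered guarantees only $5$. So $H_\tau=\max\{4+\sqrt{17},1+2/\tau\}$, and tuning $\tau$ with $H=1+2/\tau$ alone is invalid for $\tau>\tau_*=2/(3+\sqrt{17})\approx0.281$. That is exactly where your starting guess $\tau\approx\beta_{164}^{1/3}\approx0.31$ lands.

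The paper takes $\tau=\tau_*$, where $1+3/(2-\tau_*^2)=(1+\sqrt{17})/2$ and $H+1=5+\sqrt{17}$. It then needs $\kappa_{164}<(21-5\sqrt{17})/8\approx0.04806$. The exact rational evaluation $\kappa_{164}\approx0.047942$ clears this by about $10^{-4}$, giving distortion about $2.9989$.

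Your fallback refinements do not close the gap:
\begin{itemize}
\item The observation $f\ge1$ controls a fixed metric, not the supremum over metrics. The uniform argument integrates the per-threshold error over $t\in[0,\max_{j\in\supp(D)}x_j]$, and $x_j$ is only bounded by $\SC(j)+\SC(i^*)$; that is where $H+1$ comes from.
\item An $82+82$ stratified split pays $\kappa_{82}\approx\sqrt2\,\kappa_{164}$ on both components. With the true Maximal-Lottery constant this is worse than not splitting, and the mixture weight $1/(2-m_\tau)$ is not exactly $1/2$ anyway.
\end{itemize}
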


The lower bound \(N^\star\ge2\) is the optimal deterministic lower
bound~\cite{ABP18}. Thus, Theorem~\ref{thm:164} does not resolve the
conjecture \(N^\star=2\), but it replaces the previously qualitative
constant-size upper bound by an explicit value.

The proof uses the same robustified source lottery and compression
argument, with Smirnov's exact finite-sample formula~\cite{Smirnov44}
providing the sharper estimate needed for the explicit \(164\)-entry
bound.

\paragraph{List size and limiting distortion.}

Let \(\rho_K\) denote the infimum distortion among voting rules that
uniformly randomize over a deterministically identified \(K\)-element
multiset. The two constructions above imply
\begin{equation}\label{eq:rhoK-upper-intro}
\rho_K
\le
\min\left\{
2.75271+\frac{C_{\mathrm{CRWW}}}{\sqrt K},
\;
\frac52
+3\left(\frac{\pi}{8K}\right)^{1/3}
+2\sqrt{\frac{\pi}{8K}}
\right\},
\qquad K\ge802.
\end{equation}

Define
 $\rho_{\mathrm{BR}}:=\inf_{K\ge1}\rho_K$,
where \(K\) may depend on the desired accuracy but not on the numbers of
voters and candidates. Theorem~\ref{thm:main-eps} implies
\begin{equation}\label{eq:rhoBR}
\rho_{\mathrm{BR}}\le\frac52.
\end{equation}

Thus bounded randomness approaches the current best unrestricted upper
benchmark. This does not prove that bounded and unrestricted
randomization have the same optimal distortion: the unrestricted optimum
may be below \(5/2\), and our compression theorem requires control of the
individual candidates supported by the source lottery.

\paragraph{Committee selection with repeated seats.}

A uniform \(K\)-entry list can also be interpreted as a deterministic
committee with \(K\) seats when repetitions are allowed and committee
cost is additive over the seats~\cite{CGRW26,HA25}.

\begin{corollary}\label{cor:committee}
Consider the committee-selection setting in which a committee is a
multiset of \(K\) candidates, repetitions are allowed, and its social cost
is the sum of the social costs of its \(K\) entries. Then:
\begin{enumerate}[label=(\roman*)]
\item for every \(K\ge802\), there is a deterministic
committee-selection rule with distortion at most
\[
\frac52
+3\left(\frac{\pi}{8K}\right)^{1/3}
+2\sqrt{\frac{\pi}{8K}};
\]
\item a committee of \(164\) seats already suffices for distortion
strictly smaller than \(3\).
\end{enumerate}
\end{corollary}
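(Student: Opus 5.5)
The plan is a direct reduction of Corollary~\ref{cor:committee} to Theorems~\ref{thm:main-K} and~\ref{thm:164}. Given a bounded-randomness voting rule that, on each profile, deterministically identifies a multiset \(M=(c_1,\dots,c_K)\) and selects a uniform entry, define the committee-selection rule that outputs \(M\) itself as the committee. Because the multiset is determined deterministically by the profile, this committee rule is deterministic. Repetitions are allowed in the committee setting, so every multiset the voting rule might produce is a legal committee.

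The key step is to show that the two distortion notions agree. Fix a profile and a consistent metric \(d\). The committee's social cost is \(\sum_{i=1}^K \SC(c_i,d)\). The expected social cost of the uniform lottery over \(M\) is \(\frac1K\sum_{i=1}^K \SC(c_i,d)\), so the committee cost is exactly \(K\) times the lottery's expected cost.

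For the benchmark, I take the optimal committee cost to be the minimum of \(\sum_i \SC(c'_i,d)\) over all \(K\)-multisets. Since the cost is additive, this minimum is achieved by repeating an optimal single candidate \(K\) times, giving \(K\cdot\min_c \SC(c,d)\). The factors of \(K\) cancel, so the committee ratio equals the lottery's ratio on every (profile, metric) pair. Taking the supremum over pairs, the committee distortion equals the lottery distortion.

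Part (i) then follows by applying this reduction to the rule of Theorem~\ref{thm:main-K}, and part (ii) follows from the rule of Theorem~\ref{thm:164} with \(K=164\).

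The only point needing care is the definition of the committee optimum, namely that it is the minimum over multisets (which allows repetition). The additivity argument above handles this. If the intended setting instead minimized over committees of distinct candidates, the benchmark would only increase, and the distortion bound would still hold.
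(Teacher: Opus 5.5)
Your proposal is correct and follows the paper's proof: the committee cost is $K$ times the expected cost of the uniform lottery, and the optimal repeated-seat committee is $K$ copies of a minimum-cost candidate, so the two distortions coincide and Theorems~\ref{thm:main-K} and~\ref{thm:164} apply directly. Your closing remark only covers the benchmark side: in a model that genuinely requires distinct members, the rule's output would also have to be repetition-free, which this reduction does not guarantee, as the paper itself notes.
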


\begin{proof}
For a multiset \(S\) of size \(K\), its committee social cost is \(K\)
times the expected social cost of a uniformly random entry of \(S\).
Because repetitions are allowed, an optimal \(K\)-seat committee consists
of \(K\) copies of a minimum-social-cost candidate and therefore has cost
\(K\min_j\SC(j)\). Hence the distortion of \(S\) is exactly the distortion
of the uniform lottery over \(S\). The claims follow from
Theorems~\ref{thm:main-K} and~\ref{thm:164}.
\end{proof}

This consequence relies on allowing repeated winners. It does not extend
directly to models requiring distinct committee members. Finally,
Theorem~\ref{thm:main-eps} gives a finite list size for every fixed
\(\eps>0\), but does not show that one fixed finite list size attains
distortion exactly \(5/2\).

\subsection{Further Related Work}

\paragraph{Metric distortion.}

Procaccia and Rosenschein introduced distortion as a measure of the welfare loss from using ordinal rather than cardinal preferences~\cite{DBLP:conf/cia/ProcacciaR06}. Anshelevich, Bhardwaj, and Postl and Anshelevich et al. developed the metric model and analyzed classical voting rules~\cite{DBLP:conf/aaai/AnshelevichBP15,ABP18}; see also the survey of Anshelevich et al.~\cite{DBLP:conf/ijcai/AnshelevichF0V21}. For deterministic voting, work by Munagala and Wang and Kempe~\cite{DBLP:conf/ec/MunagalaW19,DBLP:conf/aaai/Kempe20a} culminated in the optimal distortion-\(3\) rule of Gkatzelis, Halpern, and Shah~\cite{GHS20}, followed by simpler optimal rules of Kizilkaya and Kempe~\cite{DBLP:conf/ijcai/KizilkayaK22,KK23}.

For randomized voting, Random Dictatorship achieves distortion \(3\)~\cite{DBLP:conf/sigecom/FeldmanFG16,DBLP:journals/jair/AnshelevichP17}, while Goel, Krishnaswamy, and Munagala initiated the study of randomized lower bounds and fairness guarantees~\cite{DBLP:conf/sigecom/GoelKM17}. Charikar and Ramakrishnan and, independently, Pulyassary and Swamy ruled out distortion \(2\)~\cite{CR22,pulyassary2021randomized}. Charikar, Ramakrishnan, Wang, and Wu obtained the first constant improvement over \(3\), achieving distortion \(2.75271\)~\cite{CRWW24}. Frank and Ye independently improved the upper bound to \(5/2\) using an equal mixture of a Maximal Lottery and an integrated veto-score lottery~\cite{Frank26,Ye26}; Ye additionally proves optimality within a broader family allowing profile-dependent mixture weights and adaptive weighting over the veto process.

\paragraph{Bounded randomness, transparency, and restricted information.}

Cai et al.~\cite{CGRW26} introduced the bounded-randomness model and showed that a uniform lottery over a constant-size, deterministically identified multiset can beat distortion \(3\). We study how closely such short uniform lotteries can preserve unrestricted distortion guarantees.

Related work considers simple or transparent randomization in other social-choice settings. Flanigan, Kehne, and Procaccia study transparent randomization over fair panels in sortition~\cite{FKP21}, while Ebadian, Filos-Ratsikas, Latifian, and Shah study explainable randomized voting rules in a normalized-utility model, including rules that select uniformly from small committees~\cite{EFLS23}. Other work restricts the information available to a voting rule, through sample-complexity bounds~\cite{DBLP:conf/aaai/GrossAX17,DBLP:conf/aaai/FainGMP19} or communication--distortion tradeoffs~\cite{DBLP:conf/aaai/Kempe20b,DBLP:conf/nips/MandalPSW19,DBLP:conf/sigecom/MandalSW20}. In contrast, bounded randomness permits access to the full ordinal profile and restricts only the complexity of the output lottery.

\paragraph{Lottery constructions and compression.}

Maximal Lotteries are classical probabilistic social-choice rules defined through equilibrium in the Condorcet game~\cite{kreweras1965aggregation,fishburn1984probabilistic,brandt2017rolling}. They play a central role in the randomized-distortion rule of Charikar et al.~\cite{CRWW24}, the bounded-randomness construction of Cai et al.~\cite{CGRW26}, and the \(5/2\)-distortion rules of Frank and Ye~\cite{Frank26,Ye26}. The veto component of the latter rules is based on the Simultaneous Veto process of Kizilkaya and Kempe~\cite{KK23}; Frank calls the resulting average-score rule Integrated Veto, while Ye calls it the Veto Lottery. Rather than proposing a new unrestricted rule, we study how to robustify and compress these lotteries under bounded randomness.

Charikar, Ramakrishnan, and Wang show that sampling from a Maximal Lottery yields constant-support approximate dominating distributions and approximate Maximal Lotteries~\cite{CRW26}, and Cai et al.~\cite{CGRW26} develop related sampling constructions for representative approximate lotteries. In contrast, our compression theorem preserves metric distortion directly, under a realization-wise robustness condition, rather than requiring the empirical lottery to retain an approximate-equilibrium property.

\paragraph{Committee selection and undominated sets.}

Metric distortion has also been studied in multiwinner voting and representative selection~\cite{DBLP:conf/sigecom/ChengDK17,DBLP:conf/aaai/ChengDK18,DBLP:journals/ai/CaragiannisSV22}. Han and Anshelevich study the compatibility of sum and max objectives for committee selection and \(k\)-facility location~\cite{HA25}. Our committee interpretation concerns the repeated-seat model, in which a committee is a multiset and its cost is additive over its entries.

Charikar et al. study small committees with strong majority guarantees~\cite{CLR25}, while Peters relates the support of low-distortion randomized rules to undominated and Condorcet-winning sets~\cite{Pet26}. Together, these results imply the complementary lower bound \(\rho_K\ge 1+\frac{4}{K+1}\), and hence \(\rho_2\ge 7/3\).

\paragraph{Empirical-process bounds.}

Our compression analysis uses classical bounds for empirical distribution functions. Dvoretzky, Kiefer, and Wolfowitz established the foundational distribution-free concentration inequality~\cite{DKW56}, and Massart proved the sharp constant in its two-sided form~\cite{Massart90}. We use the unrestricted one-sided form proved by Reeve~\cite{Reeve24}, together with Smirnov's exact finite-sample distribution of the one-sided Kolmogorov--Smirnov statistic~\cite{Smirnov44}.

\section{Preliminaries}\label{sec:prelim}
Let $V$ be a finite set of voters and $C$ a finite set of candidates.
Each voter $v\in V$ has a strict total order $\succ_v$ over $C$.
We write $\succeq_v$ for the weak order induced by $\succ_v$; that is,
$i\succeq_v j$ if either $i\succ_v j$ or $i=j$.
A preference profile is denoted by $P=(\succ_v)_{v\in V}$.
A lottery is a probability distribution $p\in\Delta(C)$.

A pseudometric $d$ on $V\cup C$ is consistent with $P$ if
$i\succ_v j\implies d(v,i)\le d(v,j)$ for every voter $v$ and
candidates $i,j$. We write $d\models P$ when $d$ is consistent with
$P$. The social cost of a candidate $j$ is
$\SC(j)=\frac1{|V|}\sum_{v\in V}d(v,j)$. For a lottery $p$, write
$\SC(p)=\mathbb{E}_{j\sim p}[\SC(j)]$. Its distortion on a profile is
\[
\dist(p,P)=\sup_{d\models P}
\frac{\SC(p)}{\min_{i\in C}\SC(i)}.
\]
with the usual convention that the ratio is finite only when the numerator is also zero whenever the denominator is zero. The metric distortion of a voting rule is the supremum over all finite profiles.

\subsection{Set-based notation and Frank's certificate}
For candidates $i,j$, let $s_{i\succ j}=\Pr_{v\sim V}[i\succ_v j]$.\footnote{Here and below,
$v\sim V$ means that $v$ is drawn uniformly at random from the finite
voter set $V$; hence these probabilities are simply fractions of
voters.}
For $I\subseteq C$ and $j\in C$, let $s_{I\succ j}=\Pr_{v\sim V}[i\succ_v j\text{ for every }i\in I]$.
If $j\in I$, this quantity is $0$. For $i\in I$, let
$s_{i\succ I^c}=\Pr_{v\sim V}[i\succ_v j\text{ for every }j\in I^c]$.
For every nonempty proper subset $I\subsetneq C$, define
\begin{equation}
q(I)=\min_{i\in I}\bigl(1-s_{i\succ I^c}\bigr).
\end{equation}
For every nonempty subset $I\subseteq C$ and every lottery $p$, define
\begin{equation}
\ell_I(p)=\sum_{j\in I^c}s_{I\succ j}p(j).
\end{equation}
In particular, $\ell_C(p)=0$ for every lottery $p$.
Let $\plu(j)$ be the fraction of voters ranking $j$ first and $\plu(S)=\sum_{j\in S}\plu(j)$.
We use the following certificate from Charikar et al.~\cite{CRWW24}, in the form used by Frank~\cite{Frank26}.

\begin{proposition}\label{prop:certificate}
If a lottery $p$ satisfies
$\ell_I(p)\le \lambda q(I)$
for every nonempty proper subset $I\subsetneq C$, then
$\dist(p,P)\le 1+2\lambda$.
\end{proposition}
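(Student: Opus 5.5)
The plan is to fix a profile \(P\), a consistent pseudometric \(d\models P\), and an optimal candidate \(i^\ast\in\arg\min_i\SC(i)\). We then show \(\SC(p)-\SC(i^\ast)\le 2\lambda\,\SC(i^\ast)\) by a layer-cake (threshold) decomposition over distances. For \(t>0\) let \(I_t=\{i\in C: d(i^\ast,i)<t\}\). This set is nonempty because it contains \(i^\ast\), and it is a proper subset of \(C\) for \(t\le\max_j d(i^\ast,j)\). We will compare both sides of the desired inequality as integrals over \(t\). The left side involves the excess cost \(d(v,j)-d(v,i^\ast)\); the right side involves the optimal cost, and the certificate enters through \(\ell_{I_t}(p)\) and \(q(I_t)\).

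\emph{Step 1 (upper bound on the excess).} Fix \(j\) and a voter \(v\). By the triangle inequality \(d(v,j)-d(v,i^\ast)\le d(i^\ast,j)=\int_0^\infty \mathbf 1[j\notin I_t]\,dt\). We refine this by splitting voters according to whether \(v\) ranks every member of \(I_t\) above \(j\). Voters who do contribute at most the indicator, and averaging them over \(v\) and \(j\sim p\) gives exactly \(\int_0^\infty \ell_{I_t}(p)\,dt\). Voters who do not rank all of \(I_t\) above \(j\) prefer \(j\) to some \(i\in I_t\), so \(d(v,j)\le d(v,i)\le d(v,i^\ast)+t\). Their excess must then be charged directly to their own cost \(d(v,i^\ast)\) rather than to \(t\). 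The aim is the pointwise bound \(\SC(j)-\SC(i^\ast)\le c\int_0^\infty s_{I_t\succ j}\mathbf 1[j\notin I_t]\,dt + (\text{a term absorbed into } \SC(i^\ast))\), with \(c\) a small constant. If the absorbed term is nonzero, the rescaling is done later so that the final constant is exactly \(1+2\lambda\). Using thresholds at scale \(t/2\) where needed should make the constants match.

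\emph{Step 2 (lower bound on the optimum).} If \(v\) does not rank \(i^\ast\) above all of \(I_t^c\), then some \(j'\) with \(d(i^\ast,j')\ge t\) satisfies \(d(v,j')\le d(v,i^\ast)\). By the triangle inequality \(d(v,i^\ast)\ge t/2\). Hence \(\SC(i^\ast)\ge \tfrac12\int_0^\infty (1-s_{i^\ast\succ I_t^c})\,dt \ge \tfrac12\int_0^\infty q(I_t)\,dt\), where the last step uses \(i^\ast\in I_t\) and the definition of \(q\) as a minimum over \(I\). Combining this with Step 1 and the hypothesis \(\ell_{I_t}(p)\le\lambda q(I_t)\) gives \(\SC(p)\le\SC(i^\ast)+2\lambda\SC(i^\ast)\), up to the bookkeeping of constants. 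The degenerate case \(\SC(i^\ast)=0\) forces every voter onto \(i^\ast\)'s location. Then \(q(I_t)=0\) for small \(t\), so \(\ell_{I_t}(p)=0\), which forces \(\SC(p)=0\).

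\emph{Main obstacle.} The hard step is Step 1: matching constants so that the excess is bounded by \(2\int\ell_{I_t}(p)\,dt\) plus nothing, or plus something exactly absorbed. The crude triangle bound charges the full \(d(i^\ast,j)\) even to voters who prefer \(j\) to some member of \(I_t\), and those voters must not be charged. I would first try to choose, for each voter and candidate \(j\), the balls centered at \(i^\ast\) of radius \(t\) and compare at radius \(t/2\) on the optimum side. If that fails, I would follow the original certificate proof of Charikar et al.~\cite{CRWW24} (as used by Frank~\cite{Frank26}). That proof reduces to a worst-case metric in which candidates lie on a line-like structure through \(i^\ast\), where the layer-cake identity becomes exact.
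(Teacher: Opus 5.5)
Your Step 2 is correct. It is exactly the estimate $\int_0^\infty q(I_t)\,dt\le 2\SC(i^\ast)$ that turns the paper's quoted biased-metric characterization into this proposition; the paper itself gives no proof and imports the certificate from Charikar et al.\ and Frank.

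The genuine gap is Step 1, which you leave open, and the route you sketch for it cannot succeed. The proposition must hold for every $\lambda\ge0$, including $\lambda=0$, where it asserts $\SC(p)\le\SC(i^\ast)$. So no residual term $\alpha\,\SC(i^\ast)$ with $\alpha>0$ can be ``absorbed''. A constant $c>1$ in front of $\int_0^\infty\ell_{I_t}(p)\,dt$, combined with your Step 2, only yields $1+2c\lambda$. You need $\SC(p)-\SC(i^\ast)\le\int_0^\infty\ell_{I_t}(p)\,dt$ with constant exactly $1$ and nothing left over. The spurious ``bad-voter'' mass comes from starting with the crude bound $d(v,j)-d(v,i^\ast)\le d(i^\ast,j)$.

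The missing idea is to apply the layer-cake decomposition to $m_{v,j}:=\min_{k:\,j\succeq_v k}d(i^\ast,k)$ instead. For each $k$ with $j\succeq_v k$, consistency and the triangle inequality give $d(v,j)\le d(v,k)\le d(v,i^\ast)+d(i^\ast,k)$, so $d(v,j)-d(v,i^\ast)\le m_{v,j}$. Use closed balls $I_t=\{k:d(i^\ast,k)\le t\}$. Then $m_{v,j}>t$ holds exactly when $v$ ranks every member of $I_t$ strictly above $j$, hence
\[
\SC(j)-\SC(i^\ast)\le\int_0^\infty\Pr_{v\sim V}[m_{v,j}>t]\,dt=\int_0^\infty s_{I_t\succ j}\,dt .
\]
Averaging over $j\sim p$ gives $\SC(p)-\SC(i^\ast)\le\int_0^\infty\ell_{I_t}(p)\,dt$.

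Your own inequality already contains this. For a voter preferring $j$ to some $i\in I_t$, $d(v,j)\le d(v,i)\le d(v,i^\ast)+d(i^\ast,i)$ says that voter's excess has no mass above level $t$. It contributes nothing at that level and need not be charged to $d(v,i^\ast)$ at all.

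Now combine three facts: the hypothesis (with $\ell_C(p)=0$ once $I_t=C$), the bound $q(I_t)\le 1-s_{i^\ast\succ I_t^c}$ (since $i^\ast\in I_t$), and your Step 2. Together they give $\SC(p)\le(1+2\lambda)\SC(i^\ast)$ for every consistent $d$. This also disposes of the case $\SC(i^\ast)=0$ without separate treatment.

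This is the paper's biased-metric computation run with $x_k=d(i^\ast,k)$. For a genuine biased metric, the first bound is the identity $\SC(p)-\SC(i^\ast)=L(p)$. For an arbitrary consistent metric you only get inequalities, which is all the proposition needs.
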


\subsection{Maximal Lotteries and Integrated Veto}

A Maximal Lottery is an equilibrium strategy of the symmetric zero-sum Condorcet game. We use two facts about Maximal Lotteries.

\begin{proposition}[Frank~\cite{Frank26}]\label{prop:ml-certificate}
For every Maximal Lottery $p^{\ML}$ and every nonempty proper $I\subsetneq C$,
\[
\ell_I(p^{\ML})\le \min\{q(I),1/2\}.
\]
\end{proposition}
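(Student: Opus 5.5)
The plan is to use only the defining equilibrium property of a Maximal Lottery. Because $p^{\ML}$ is an optimal strategy in the symmetric zero-sum Condorcet game, no pure deviation earns positive payoff against it. Since $s_{i\succ j}+s_{j\succ i}=1$ for $i\ne j$, this reads
\[
\sum_{j\ne i}p^{\ML}(j)\bigl(s_{i\succ j}-s_{j\succ i}\bigr)\le 0,
\qquad\text{equivalently}\qquad
\sum_{j\ne i}p^{\ML}(j)\,s_{i\succ j}\le \tfrac12\bigl(1-p^{\ML}(i)\bigr),
\]
for every candidate $i$. Both parts of the claimed bound would be derived from this inequality, together with the monotonicity $s_{I\succ j}\le s_{i\succ j}$, valid for every $i\in I$.

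\emph{The bound $1/2$.} Fix any $i\in I$. Since $s_{I\succ j}\le s_{i\succ j}$ and all terms are nonnegative,
\[
\ell_I(p^{\ML})=\sum_{j\in I^c}s_{I\succ j}\,p^{\ML}(j)\le\sum_{j\ne i}s_{i\succ j}\,p^{\ML}(j)\le\tfrac12 .
\]

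\emph{The bound $q(I)$.} Let $i^*\in I$ attain the minimum in $q(I)$, and set $a=s_{i^*\succ I^c}$, so that $q(I)=1-a$. I would pass to the voter level and write
\[
\ell_I(p^{\ML})=\Pr_{v\sim V,\;j\sim p^{\ML}}\bigl[j\in I^c,\ I\succ_v j\bigr].
\]
The voters are then split into the set $A$ of voters ranking $i^*$ above all of $I^c$, of mass $a$, and its complement. Voters outside $A$ contribute at most $(1-a)\,p^{\ML}(I^c)$. For a voter in $A$, the event $I\succ_v j$ with $j\in I^c$ forces the whole of $I$ above $j$.

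The aim is to absorb this contribution using the equilibrium inequality at a suitably chosen deviation. The candidates I would try are $i^*$ itself, or a voter-dependent ``top of $I$'' candidate, averaged with weights chosen so that it still qualifies as a legitimate mixed deviation. The target is
\[
\ell_I(p^{\ML})\le(1-a)\,p^{\ML}(I^c)+(\text{terms bounded by }(1-a)\,p^{\ML}(I)),
\]
which sums to $1-a$. A first attempt is to apply the equilibrium inequality at $i^*$ and split the sum into $j\in I$ and $j\in I^c$. For $j\in I^c$ one has $s_{j\succ i^*}\le 1-a$, because voters in $A$ never rank such a $j$ above $i^*$. This yields
\[
\sum_{j\in I^c}p^{\ML}(j)\,s_{i^*\succ j}\le (1-a)\,p^{\ML}(I^c)+\sum_{j\in I}p^{\ML}(j)\bigl(s_{j\succ i^*}-s_{i^*\succ j}\bigr),
\]
which bounds $\ell_I(p^{\ML})$ via $s_{I\succ j}\le s_{i^*\succ j}$.

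\emph{Main obstacle.} The hard step is controlling the residual sum over $j\in I$ by $(1-a)\,p^{\ML}(I)$. Nothing prevents $s_{j\succ i^*}$ from being close to $1$ for $j\in I$, so a single pure deviation is probably insufficient. If it does fail, I would use the minimality of $i^*$ in $q(I)$: every $i\in I$ satisfies $1-s_{i\succ I^c}\ge 1-a$. With this, I would average the equilibrium inequalities over all $i\in I$ with weights $p^{\ML}(i)/p^{\ML}(I)$. The antisymmetric terms $s_{j\succ i}-s_{i\succ j}$ for $i,j\in I$ then cancel in pairs, and what remains is a comparison between $I$ and $I^c$ that can be bounded through the quantities $s_{i\succ I^c}$. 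If even this averaging does not close the gap, I would fall back on the argument as given in Frank~\cite{Frank26}, which the statement cites.
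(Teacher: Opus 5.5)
Your argument for $\ell_I(p^{\ML})\le 1/2$ is correct. The $q(I)$ half, however, which is the substantive part, is not proven: the proposal ends with candidate strategies and a fallback to Frank's argument. The paper only cites this proposition, so there is no in-paper proof to lean on. Write $p=p^{\ML}$. Your averaging route, as described, does not close the gap on its own. If you bound the remaining $I$-versus-$I^c$ terms using $s_{y\succ x}\le 1-s_{x\succ I^c}$, you get $\ell_I(p)\le p(I^c)\sum_{x\in I}\frac{p(x)}{p(I)}\bigl(1-s_{x\succ I^c}\bigr)$. This is a $p$-weighted average over $I$, not the minimum. The minimality of $i^*$ only says $1-s_{x\succ I^c}\ge q(I)$ for every $x\in I$, which points the wrong way.

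What the averaging does give, once you use $s_{x\succ y}-s_{y\succ x}=2s_{x\succ y}-1$, is
\[
\sum_{x\in I,\,y\in I^c}p(x)p(y)\,s_{x\succ y}\le\tfrac12\,p(I)\,p(I^c).
\]
Since $s_{I\succ y}\le s_{x\succ y}$, this yields $\ell_I(p)\le\frac12 p(I^c)$ whenever $p(I)>0$.

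The missing ingredient is a second bound, $p(I^c)\le 2(1-a)$. This is where the voters in $A$ enter, through transitivity: if $v\in A$, $y\in I^c$, $x\in I$ and $y\succ_v x$, then $i^*\succ_v x$. Define
\[
E_A=\mathbb E_v\bigl[\mathbf 1[v\in A]\,p(\{x\in I:i^*\succ_v x\})\bigr].
\]
Rewrite the display as $\sum_{x\in I,\,y\in I^c}p(x)p(y)s_{y\succ x}\ge\frac12 p(I)p(I^c)$ and split voters into $A$ and its complement. For $p(I^c)>0$ this gives $\frac12 p(I)\le E_A+(1-a)p(I)$.

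Next, use the pure deviation to $i^*$, in which every $A$-voter ranks all of $I^c$ below $i^*$. It gives $a\,p(I^c)+E_A\le\frac12\bigl(1-p(i^*)\bigr)$. Eliminate $E_A$ and substitute $p(I)=1-p(I^c)$ to get $\frac12 p(I^c)\le 1-a-\frac12 p(i^*)$. Hence $\ell_I(p)\le\frac12 p(I^c)\le 1-a=q(I)$.

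The degenerate cases are immediate. If $p(I^c)=0$, then $\ell_I(p)=0$. If $p(I)=0$, the same deviation inequality forces $a\le\frac12$, and your $1/2$ bound suffices.

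Your single-deviation attempt at $i^*$ stalls because it never produces this second inequality controlling $p(I^c)$. The residual $\sum_{j\in I}p(j)(s_{j\succ i^*}-s_{i^*\succ j})$ is absorbed only by combining the deviation at $i^*$ with the averaged constraint over $I$.
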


\begin{proposition}[Cai et al.~\cite{CGRW26}]\label{prop:ml-support}
Every candidate in the support of a Maximal Lottery has deterministic metric distortion at most $4+\sqrt{17}$.
\end{proposition}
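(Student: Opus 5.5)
The plan is to combine the equilibrium conditions of the Condorcet game with the elementary pairwise bound from metric distortion. Fix a profile, a Maximal Lottery \(p=p^{\ML}\), a candidate \(a\in\supp(p)\), a consistent metric \(d\), and an optimal candidate \(o\). The basic tool is the \emph{pairwise lemma}: for any candidates \(x,y\) with \(\alpha=s_{x\succ y}>0\),
\[
\SC(x)\le \SC(y)+d(x,y)\le \Bigl(1+\tfrac{2}{\alpha}\Bigr)\SC(y).
\]
Indeed, every voter \(v\) with \(x\succ_v y\) satisfies \(d(x,y)\le d(v,x)+d(v,y)\le 2d(v,y)\). Averaging over these voters gives \(\alpha\, d(x,y)\le 2\SC(y)\).

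The equilibrium input is as follows. Write \(m(x,y)=s_{x\succ y}-s_{y\succ x}=2s_{x\succ y}-1\). Since the Condorcet game is symmetric and zero-sum, its value is \(0\). Hence every pure strategy satisfies \(\sum_i p(i)m(o,i)\le 0\), and by complementary slackness every supported strategy satisfies \(\sum_i p(i)m(a,i)=0\). Subtracting gives
\[
\mathbb{E}_{b\sim p}\bigl[m(a,b)+m(b,o)\bigr]\ge 0,
\]
so some \(b\in\supp(p)\) satisfies \(s_{a\succ b}+s_{b\succ o}\ge 1\). Of these two conditions, the first uses that \(a\) lies in the support, and the second uses the equilibrium property at \(o\).

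Next I would split into cases with a threshold \(\alpha\in(0,1)\) to be optimized.
\begin{itemize}
\item If \(s_{a\succ o}\ge\alpha\), the pairwise lemma gives distortion at most \(1+2/\alpha\).
\item Otherwise, with \(x=s_{b\succ o}\) and \(s_{a\succ b}\ge 1-x\), bound
\[
d(a,o)\le d(a,b)+d(b,o)\le \frac{2\SC(b)}{1-x}+\frac{2\SC(o)}{x},
\]
and substitute \(\SC(b)\le(1+2/x)\SC(o)\).
\end{itemize}
This naive chain blows up when \(x\) is near \(0\) or near \(1\), so the two extremes need extra information.
\begin{itemize}
\item When \(x\) is near \(1\), \(b\) is nearly as good as \(o\). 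I would then use the alternative chain \(\SC(a)\le\SC(b)+d(a,b)\) together with the slack in \(s_{a\succ b}\), exploiting \(s_{a\succ o}<\alpha\).
\item When \(x\) is small, \(s_{a\succ b}\) is close to \(1\). I would instead take the average over \(b\sim p\) rather than a single \(b\), so that the mass of \(p\) on candidates that \(o\) beats heavily is controlled.
\end{itemize}

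The main obstacle is this balancing step: getting a single closed-form bound from the averaged inequality rather than from one witness \(b\). I would first try to bound \(\SC(a)\) by averaging the triangle inequality \(d(a,o)\le d(a,b)+d(b,o)\) over \(b\sim p\). Each term is weighted by the pairwise fractions through the voter-wise estimates \(d(x,y)\le 2d(v,y)\), which turns the equilibrium identity into a linear inequality in \(\SC(a)/\SC(o)\) with a parameter. Optimizing that parameter against the threshold case \(1+2/\alpha\) should give a quadratic \(t^2-8t-1=0\), whose positive root is \(4+\sqrt{17}\). Matching the constant exactly is the step I expect to require the most care.
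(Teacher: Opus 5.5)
The paper does not prove this proposition; it imports it from Cai et al. So your sketch has to stand on its own, and as written it does not: the step that carries all the difficulty is only announced, not carried out.

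\textbf{The single-witness route cannot close.} Your equilibrium step yields one witness $b\in\supp(p)$ with $s_{a\succ b}+s_{b\succ o}\ge 1$. (For $b\in\{a,o\}$ the correct reading is $s_{a\succ o}\ge\frac12$, since $m(c,c)=0$.) As you notice, this information is useless at both ends. If $s_{b\succ o}\approx 0$, then $b$ may be arbitrarily bad relative to $o$. If $s_{a\succ b}\approx 0$, nothing ties $a$ to $b$. Using $b\in\supp(p)$ to control $b$ is circular. It also cannot be closed by passing to the worst supported candidate, because the chain $\SC(a)\le\SC(b)+(1-s_{a\succ b})\,d(a,b)$ has coefficient $1$ on $\SC(b)$.

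\textbf{The averaging remedy is not executed.} You propose averaging $d(a,o)\le d(a,b)+d(b,o)$ over $b\sim p$, but this has three problems:
\begin{itemize}
\item the term $\mathbb{E}_b[d(b,o)]$ needs a bound on $\SC(p)$, which you never invoke;
\item the voter-wise bounds on $d(a,b)$ bring $\SC(a)$-type terms back with no visible contraction;
\item the quadratic $t^2-8t-1=0$ is read off from the target constant rather than derived.
\end{itemize}

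\textbf{The missing ingredient} is a global bound on the lottery itself: $\SC(p)\le 3\SC(o)$ for an optimal $o$. This follows from Propositions~\ref{prop:certificate} and~\ref{prop:ml-certificate} with $\lambda=1$. Combined with complementary slackness at $a$, it gives a short proof with no case split.
\begin{itemize}
\item Put $w_v=\Pr_{b\sim p}[a\succeq_v b]$. Since $d(v,b)\ge d(v,a)$ whenever $a\succeq_v b$, we get $A:=\mathbb{E}_v[w_v d(v,a)]\le \SC(p)$.
\item Complementary slackness, $\sum_b p(b)m(a,b)=0$, gives $W:=\mathbb{E}_v[w_v]=\frac{1+p(a)}{2}\ge\frac12$.
\item Weighting $d(a,o)\le d(v,a)+d(v,o)$ by $w_v$ and averaging gives $W\,d(a,o)\le A+\mathbb{E}_v[w_v d(v,o)]$.
\item Applying $d(v,a)\le d(v,o)+d(a,o)$ to the remaining weight $1-w_v$ gives $\SC(a)\le A+\mathbb{E}_v[(1-w_v)d(v,o)]+(1-W)\,d(a,o)$.
\end{itemize}
Since $1-W\le W$, adding the last two inequalities yields
\[
\SC(a)\le 2\SC(p)+\SC(o)\le 7\,\SC(o),
\]
which is stronger than the claimed $4+\sqrt{17}$.
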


We next describe Integrated Veto in continuous time. This is equivalent to the phase-based definition of Frank~\cite{Frank26}, which is built on the Simultaneous Veto process of Kizilkaya and Kempe~\cite{KK23}.

At time $0$, candidate $j$ has score $r_j(0)=\plu(j)$.
A candidate is active while its score is positive. At every time at which the active set is nonempty, each voter vetoes her least preferred active candidate. Let $b_j(t)$ be the fraction of voters vetoing $j$ at time $t$; set $b_j(t)=0$ when $j$ is inactive. Thus
\begin{equation}\label{eq:b-sum}
\sum_{j\in C}b_j(t)=1
\end{equation}
while the process is active, and $r_j'(t)=-b_j(t)$
on each phase. Once a score reaches $0$, it remains $0$.

Let $S(t)=\sum_{j\in C}r_j(t)$.
Since $S(0)=\sum_j\plu(j)=1$ and $S'(t)=-1$, the process ends at time $1$ and
\begin{equation}\label{eq:total-score}
S(t)=1-t,\qquad 0\le t\le 1.
\end{equation}
Integrated Veto assigns candidate $j$ probability
\begin{equation}\label{eq:iv-continuous}
p^{\IV}(j)=2\int_0^1 r_j(t)\,dt.
\end{equation}
Indeed, on a phase of length $\Delta$ in which $r_j$ decreases linearly from $r_j^-$ to $r_j^+$, the right-hand side contributes $\Delta(r_j^-+r_j^+)$, exactly as in Frank's definition. Equation~\eqref{eq:total-score} also gives
\[
\sum_jp^{\IV}(j)=2\int_0^1(1-t)\,dt=1.
\]

For later use, we record a short continuous-time proof of Frank's key estimate.

\begin{lemma}[Frank~\cite{Frank26}]\label{lem:iv-certificate}
For every nonempty proper $I\subsetneq C$,
\[
\ell_I(p^{\IV})\le \plu(I^c)^2\le q(I)^2.
\]
\end{lemma}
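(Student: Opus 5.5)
The plan is to prove the two inequalities separately. The second is a direct counting bound. The first comes from comparing, at each time \(t\), the \(\ell_I\)-weighted score mass of \(I^c\) with the rate at which the total score of \(I^c\) decreases, and then integrating.

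\emph{Second inequality.} Fix \(i\in I\). A voter whose top candidate lies in \(I^c\) does not rank \(i\) above every member of \(I^c\). Hence \(s_{i\succ I^c}\le 1-\plu(I^c)\), that is, \(1-s_{i\succ I^c}\ge \plu(I^c)\). Taking the minimum over \(i\in I\) gives \(q(I)\ge\plu(I^c)\), and squaring (both sides are nonnegative) gives \(\plu(I^c)^2\le q(I)^2\).

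\emph{First inequality.} Let \(R(t)=\sum_{j\in I^c}r_j(t)\), so \(R(0)=\plu(I^c)\). Since every \(r_j\) is piecewise linear, \(R\) is too, and \(R'(t)=-\sum_{j\in I^c}b_j(t)\) on each phase. Fix \(t\) and let \(m(t)=\max\{s_{I\succ j}: j\in I^c \text{ active at } t\}\), with \(m(t)=0\) if no such \(j\) is active. The key claim is
\[
\sum_{j\in I^c}b_j(t)\ \ge\ m(t).
\]
To see it, let \(j^*\in I^c\) attain the maximum. Every voter counted in \(s_{I\succ j^*}\) ranks \(j^*\) below all of \(I\). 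Since \(j^*\) is active, her least preferred active candidate is either \(j^*\) or something ranked below \(j^*\). In either case it is not in \(I\), so her veto lands in \(I^c\).

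Next I bound the integrand. Inactive candidates have \(r_j(t)=0\), so
\[
\sum_{j\in I^c}s_{I\succ j}\,r_j(t)\ \le\ m(t)\,R(t)\ \le\ -R'(t)\,R(t).
\]
By \eqref{eq:iv-continuous},
\[
\ell_I(p^{\IV})=2\int_0^1\sum_{j\in I^c}s_{I\succ j}\,r_j(t)\,dt\ \le\ -\int_0^1 2R'(t)R(t)\,dt=R(0)^2-R(1)^2\ \le\ \plu(I^c)^2 .
\]

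The main point needing care is the pointwise domination \(\sum_{j\in I^c}b_j\ge m(t)\), specifically that the right voters really veto inside \(I^c\). This holds because \(j^*\) is active and lies below all of \(I\) in those voters' rankings, so it does not matter whether members of \(I\) are still active. The remaining technicalities are mild. The identity \(\int -2R'R=R(0)^2-R(1)^2\) holds because \(R\) is piecewise linear and continuous, and it is enough to work at times in the interior of phases, since phase boundaries form a finite set.
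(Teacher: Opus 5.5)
Your proof is correct and follows essentially the same route as the paper's: for active $j\in I^c$ you bound $s_{I\succ j}$ pointwise by the veto rate into $I^c$, integrate $-2R'R$ to get at most $R(0)^2=\plu(I^c)^2$, and use the same counting argument for $\plu(I^c)\le q(I)$. The differences are cosmetic: you pass the termwise bound through the maximum $m(t)$, and you keep the $-R(1)^2$ term, which the paper drops because $R(1)=0$ when the process ends at time $1$.
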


\begin{proof}
Fix $I$. Define
\[
R_I(t)=\sum_{j\in I^c}r_j(t),
\qquad
B_I(t)=\sum_{j\in I^c}b_j(t).
\]
Then $R_I'(t)=-B_I(t)$ and $R_I(0)=\plu(I^c)$.

If $j\in I^c$ is active at time $t$ and a voter is counted by $s_{I\succ j}$, every member of $I$ is ranked above $j$. Therefore that voter's least preferred active candidate lies in $I^c$. Hence
\begin{equation}\label{eq:frank-pointwise}
s_{I\succ j}\le B_I(t)
\end{equation}
whenever $j\in I^c$ is active. Consequently,
\begin{multline*}
\ell_I(p^{\IV})=2\int_0^1\sum_{j\in I^c}s_{I\succ j}r_j(t)\,dt\le 2\int_0^1 B_I(t)R_I(t)\,dt
=-\int_0^1\frac{d}{dt}R_I(t)^2\,dt
=R_I(0)^2=\plu(I^c)^2.
\end{multline*}
For every $i\in I$, a voter whose top choice belongs to $I^c$ cannot rank $i$ above every candidate in $I^c$. Thus
\[
\plu(I^c)\le 1-s_{i\succ I^c}.
\]
Minimizing over $i\in I$ yields $\plu(I^c)\le q(I)$.
\end{proof}

Combining Lemma~\ref{lem:iv-certificate} and Proposition~\ref{prop:ml-certificate}, Frank obtains
\[
\ell_I\left(\frac12p^{\IV}+\frac12p^{\ML}\right)
\le \frac12q(I)^2+\frac12\min\{q(I),1/2\}
\le \frac34q(I),
\]
and hence distortion at most $5/2$ by Proposition~\ref{prop:certificate}. Frank also gives a matching lower-bound family for this particular rule~\cite{Frank26}.

\subsection{The biased-metric integral characterization}

Our empirical approximation argument needs the exact biased-metric formulation used by Charikar et al.~\cite{CRWW24} and Cai et al.~\cite{CGRW26}. We state only the part we need.

Fix a preference profile and a biased metric with optimal candidate $i^*$, represented by nonnegative values $(x_j)_{j\in C}$ with $x_{i^*}=0$. For $t\ge0$, let
\[
I_t=\{j\in C:x_j\le t\}.
\]
For a lottery $p$, define
\[
\ell(p,t)=\ell_{I_t}(p),
\qquad
L(p)=\int_0^\infty \ell(p,t)\,dt.
\]
Let $R=2\SC(i^*)$.
The characterization implies that a lottery has distortion at most $1+2\lambda$ if and only if
\begin{equation}\label{eq:biased-characterization}
L(p)\le \lambda R
\end{equation}
for every biased metric consistent with the profile.
More precisely, for every voter $v$ and candidate $j$, the biased
metric satisfies
$d(i^*,v)=\frac12\max_{a,b:\,a\succeq_v b}(x_a-x_b)$ and
$d(j,v)-d(i^*,v)=\min_{k:\,j\succeq_v k}x_k$. We use the parameters
$x_j$ to define the threshold sets $I_t=\{j\in C:x_j\le t\}$; no
identification of $x_j$ with a candidate--candidate distance will be
needed.


\subsection{One-sided empirical-process bounds}\label{sec:ks}

Let $U_1,\dots,U_K$ be independent uniform random variables on $[0,1]$, let $\widehat F_K$ be their empirical CDF, and define the one-sided Kolmogorov--Smirnov statistic $D_K^+=\sup_{0\le x\le1}\bigl(\widehat F_K(x)-x\bigr)$.
The one-sided Dvoretzky--Kiefer--Wolfowitz--Massart inequality, in the unrestricted form proved by Reeve~\cite{DKW56,Massart90,Reeve24}, gives
\begin{equation}\label{eq:one-sided-dkw}
\Pr[D_K^+>z]\le e^{-2Kz^2},
\qquad z>0.
\end{equation}
Integrating the tail bound yields
\begin{equation}\label{eq:one-sided-dkw-exp}
\mathbb E[D_K^+]
\le
\int_0^\infty e^{-2Kz^2}\,dz
=
\sqrt{\frac{\pi}{8K}}.
\end{equation}

For the finite-list calculation in Theorem~\ref{thm:164}, we use the exact finite-sample distribution. Write $\kappa_K:=\mathbb E[D_K^+]$.
Smirnov's formula~\cite{Smirnov44} states that, for $0<x<1$,
\begin{equation}\label{eq:smirnov-tail}
\Pr[D_K^+\ge x]
=
x\sum_{j=0}^{\lfloor K(1-x)\rfloor}
\binom Kj
\left(x+\frac jK\right)^{j-1}
\left(1-x-\frac jK\right)^{K-j},
\end{equation}
where for $j=0$ the factor $x(x+j/K)^{j-1}$ is interpreted as $1$.
In particular, $\kappa_K=\int_0^1\Pr[D_K^+\ge x]dx$. We will evaluate this finite expression exactly for $K=164$ in Section~\ref{sec:164}.

\section{Robustifying Mixed Integrated Veto}\label{sec:trunc}
The difficulty in sampling directly from $\MIV$ is that Integrated Veto can put positive probability on candidates of unbounded deterministic distortion. We first discard the Integrated-Veto probability assigned to sufficiently early eliminations. Unlike a direct truncation followed by reassignment of the deleted mass, we keep the surviving Integrated-Veto mass as a subdistribution and normalize only after adding a full Maximal Lottery. This preserves more of Frank's original certificate.

For a candidate $j$ with positive initial score, let $T_j\in(0,1]$ be the time at which its score reaches $0$. For a candidate with $\plu(j)=0$, set $T_j=0$. After time $T_j$, both $r_j(t)$ and $b_j(t)$ are understood to be $0$.

\subsection{Candidates that survive for a positive amount of time}

\begin{lemma}\label{lem:survival}
Fix $\tau\in(0,1]$. If $T_j\ge\tau$, then candidate $j$ has deterministic metric distortion at most $1+\frac2\tau$.
Equivalently, for every nonempty proper $I\subsetneq C$,
\begin{equation}\label{eq:survival-cert}
\ell_I(\delta_j)\le \frac1\tau q(I),
\end{equation}
where $\delta_j$ is the point mass on $j$.
\end{lemma}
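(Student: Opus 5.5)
The plan is to prove the certificate inequality \eqref{eq:survival-cert} directly. The distortion bound then follows from Proposition~\ref{prop:certificate} applied to the point mass $\delta_j$ with $\lambda=1/\tau$, which gives $\dist(\delta_j,P)\le 1+2/\tau$. Fix a nonempty proper $I\subsetneq C$. By definition, $\ell_I(\delta_j)=s_{I\succ j}$ if $j\in I^c$ and $\ell_I(\delta_j)=0$ if $j\in I$. The second case is trivial because $q(I)\ge 0$, so assume $j\in I^c$.

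The key step reuses the pointwise estimate \eqref{eq:frank-pointwise} from the proof of Lemma~\ref{lem:iv-certificate}. Since $T_j\ge\tau>0$, we have $\plu(j)>0$, and $r_j(t)>0$ for every $t\in[0,T_j)$. Hence $j$ is active throughout $[0,\tau)$. Every voter counted in $s_{I\succ j}$ ranks all of $I$ above $j$, so her least preferred active candidate lies in $I^c$. This gives $s_{I\succ j}\le B_I(t)$ for all $t\in[0,\tau)$, except possibly at the finitely many phase boundaries, which do not affect integrals.

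Next, integrate this inequality over $[0,\tau]$. Using $R_I'=-B_I$ and $R_I\ge 0$, we get
\[
\tau\, s_{I\succ j}\le \int_0^\tau B_I(t)\,dt = R_I(0)-R_I(\tau)\le R_I(0)=\plu(I^c).
\]
The final inequality $\plu(I^c)\le q(I)$ was already established at the end of the proof of Lemma~\ref{lem:iv-certificate}. Dividing by $\tau$ yields $\ell_I(\delta_j)\le q(I)/\tau$, which is \eqref{eq:survival-cert}.

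I do not expect a serious obstacle. The only point needing care is justifying that $R_I$ is absolutely continuous, being piecewise linear across phases, so that the fundamental theorem of calculus applies on $[0,\tau]$. A second minor point is checking that the pointwise bound holds on all of $[0,\tau)$ rather than only on $[0,T_j)$; this is fine since $T_j\ge\tau$.

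For the word ``equivalently'', only the implication from \eqref{eq:survival-cert} to the distortion bound is needed, and that is supplied by Proposition~\ref{prop:certificate}. I would note this and not pursue the converse direction of the certificate.
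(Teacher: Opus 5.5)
Your proof is correct and follows the paper's argument: restrict to $j\in I^c$, use the pointwise bound $s_{I\succ j}\le B_I(t)$ on $[0,\tau)$, integrate to get $\tau\, s_{I\succ j}\le \plu(I^c)\le q(I)$, and invoke Proposition~\ref{prop:certificate} with $\lambda=1/\tau$. The only cosmetic difference is that you bound $\int_0^\tau B_I$ by $R_I(0)-R_I(\tau)\le R_I(0)$, whereas the paper enlarges the integral to $[0,1]$; the two steps are equivalent.
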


\begin{proof}
Fix $I$. If $j\in I$, then $\ell_I(\delta_j)=0$, so suppose $j\in I^c$. Since $T_j\ge\tau$, candidate $j$ is active for every $t\in[0,\tau)$. By the argument in~\eqref{eq:frank-pointwise}, $s_{I\succ j}\le B_I(t)$
for all such $t$. Hence
\[
\tau s_{I\succ j}
\le \int_0^\tau B_I(t)\,dt
\le \int_0^1B_I(t)\,dt
=R_I(0)=\plu(I^c)
\le q(I).
\]
Thus $s_{I\succ j}\le q(I)/\tau$. Since $\ell_I(\delta_j)=s_{I\succ j}$ for $j\in I^c$,~\eqref{eq:survival-cert} follows. Proposition~\ref{prop:certificate} with $\lambda=1/\tau$ gives distortion at most $1+2/\tau$.
\end{proof}

\subsection{Probability assigned to early eliminations}

\begin{lemma}\label{lem:early-mass}
For $\tau\in(0,1]$, the total Integrated-Veto probability assigned to candidates with $T_j<\tau$ is at most $\tau^2$:
\[
m_\tau:=\sum_{j:T_j<\tau}p^{\IV}(j)\le \tau^2.
\]
\end{lemma}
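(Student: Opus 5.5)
The plan is to bound the total remaining score of the early-eliminated candidates backwards from time \(\tau\). The key fact is that the combined veto rate is at most \(1\).

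Let \(E=\{j: T_j<\tau\}\) and \(R_E(t)=\sum_{j\in E}r_j(t)\). Candidates with \(\plu(j)=0\) have \(r_j\equiv 0\) and contribute nothing. Every \(j\in E\) has \(r_j(t)=0\) for \(t\ge T_j\), and in particular for \(t\ge\tau\). So by~\eqref{eq:iv-continuous},
\[
m_\tau=2\int_0^1 R_E(t)\,dt=2\int_0^\tau R_E(t)\,dt,
\qquad R_E(\tau)=0 .
\]

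Next I control the slope of \(R_E\). Each \(r_j\) is continuous and piecewise linear, with \(r_j'(t)=-b_j(t)\) on each phase. Hence \(R_E\) is continuous and piecewise linear, with \(R_E'(t)=-\sum_{j\in E}b_j(t)\). Since the \(b_j\) are nonnegative, \eqref{eq:b-sum} gives \(\sum_{j\in E}b_j(t)\le 1\), so \(R_E'(t)\ge -1\) almost everywhere. Integrating from \(t\) to \(\tau\) then yields
\[
0=R_E(\tau)\ge R_E(t)-(\tau-t),
\qquad\text{so}\qquad
R_E(t)\le \tau-t \quad\text{for } 0\le t\le\tau .
\]

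Substituting this into the expression for \(m_\tau\) gives
\[
m_\tau\le 2\int_0^\tau(\tau-t)\,dt=\tau^2 .
\]

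There is no real obstacle. The only points needing care are the following:
\begin{itemize}
\item the convention that \(r_j\) and \(b_j\) vanish after \(T_j\), which is what gives \(R_E(\tau)=0\);
\item the legitimacy of integrating the piecewise derivative, which holds because \(R_E\) is continuous and piecewise linear with finitely many phases;
\item the case where the process ends before time \(1\) or some \(b_j\) vanish, which is harmless because the bound \(\sum_{j\in E} b_j\le 1\) still holds there.
\end{itemize}
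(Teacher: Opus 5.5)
Your proof is correct and is essentially the paper's argument. Both rest on two facts: the scores of the early-eliminated candidates vanish by time $\tau$, and the total veto rate $\sum_j b_j(t)$ is at most $1$. The paper first swaps the order of integration to get $p^{\IV}(j)=2\int_0^{T_j}s\,b_j(s)\,ds$, a per-candidate identity it reuses in Remark~\ref{rem:iv-unbounded}, and then applies the rate bound; you apply the same bound directly, through the pointwise estimate $R_E(t)\le\tau-t$, before integrating.
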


\begin{proof}
For $t\le T_j$,
\[
r_j(t)=\int_t^{T_j}b_j(s)\,ds,
\]
because $r_j(T_j)=0$ and $r_j'(s)=-b_j(s)$. Using~\eqref{eq:iv-continuous} and Fubini's theorem,
\begin{align}
 p^{\IV}(j)=2\int_0^{T_j}r_j(t)\,dt =2\int_0^{T_j}\int_t^{T_j}b_j(s)\,ds\,dt =2\int_0^{T_j}s\,b_j(s)\,ds.\label{eq:iv-elim-mass}
\end{align}
Therefore
\begin{align*}
\sum_{j:T_j<\tau}p^{\IV}(j)
=2\sum_{j:T_j<\tau}\int_0^{T_j}s\,b_j(s)\,ds\le 2\int_0^\tau s\sum_{j\in C}b_j(s)\,ds=2\int_0^\tau s\,ds=\tau^2,
\end{align*}
where the penultimate equality uses~\eqref{eq:b-sum}.
\end{proof}

\subsection{Deleting early Integrated-Veto mass}

Let $p^{\IV,\ge\tau}$ be the subdistribution obtained from Integrated Veto by retaining only candidates that survive until time $\tau$:
\begin{equation}\label{eq:surviving-IV}
p^{\IV,\ge\tau}(j)=p^{\IV}(j)\mathbf 1[T_j\ge\tau].
\end{equation}
Its total mass is $1-m_\tau$. Define
\begin{equation}\label{eq:D-tau}
D_\tau
=
\frac{p^{\IV,\ge\tau}+p^{\ML}}{2-m_\tau}.
\end{equation}
This is a lottery because the numerator has total mass $(1-m_\tau)+1=2-m_\tau$.

\begin{proposition}\label{prop:D-tau}
For every $\tau\in(0,1]$,
\begin{equation}\label{eq:D-tau-dist}
\dist(D_\tau,P)
\le
1+\frac{3}{2-m_\tau}
\le
1+\frac{3}{2-\tau^2}.
\end{equation}
Every candidate in the support of $D_\tau$ has deterministic distortion at most
\begin{equation}\label{eq:Htau}
H_\tau=
\max\left\{4+\sqrt{17},\ 1+\frac2\tau\right\}.
\end{equation}
\end{proposition}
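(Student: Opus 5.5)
The plan is to verify the hypothesis of Proposition~\ref{prop:certificate} for $D_\tau$ with $\lambda=\frac{3}{2(2-m_\tau)}$, since then $1+2\lambda=1+\frac{3}{2-m_\tau}$. Fix a nonempty proper $I\subsetneq C$. Since $\ell_I$ is linear and nonnegative in its argument, also on subdistributions, we get
\[
\ell_I(D_\tau)=\frac{\ell_I(p^{\IV,\ge\tau})+\ell_I(p^{\ML})}{2-m_\tau}.
\]
The retained subdistribution is dominated pointwise by $p^{\IV}$, and every $s_{I\succ j}\ge0$. Hence $\ell_I(p^{\IV,\ge\tau})\le\ell_I(p^{\IV})\le q(I)^2$ by Lemma~\ref{lem:iv-certificate}. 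Proposition~\ref{prop:ml-certificate} gives $\ell_I(p^{\ML})\le\min\{q(I),1/2\}$.

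It remains to check the scalar inequality $q^2+\min\{q,1/2\}\le\frac32 q$ for $q=q(I)\in[0,1]$. This is Frank's computation before halving the weights.
\begin{itemize}
\item If $q\le1/2$, the left side is $q^2+q\le\frac32 q$, because $q^2\le q/2$.
\item If $q>1/2$, the left side is $q^2+\frac12$, and $q^2-\frac32q+\frac12=(q-1)(q-\frac12)\le0$.
\end{itemize}
So $\ell_I(D_\tau)\le\frac{3}{2(2-m_\tau)}q(I)$, and Proposition~\ref{prop:certificate} yields the first bound in \eqref{eq:D-tau-dist}. The second bound follows from Lemma~\ref{lem:early-mass}, since $m_\tau\le\tau^2\le1<2$ and $x\mapsto\frac{3}{2-x}$ is increasing on $[0,2)$.

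For the support claim, note that $\supp(D_\tau)\subseteq\supp(p^{\ML})\cup\{j:T_j\ge\tau\}$. Candidates in the first set have deterministic distortion at most $4+\sqrt{17}$ by Proposition~\ref{prop:ml-support}. Candidates in the second set have deterministic distortion at most $1+2/\tau$ by Lemma~\ref{lem:survival}. Taking the maximum gives $H_\tau$.

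The only step needing care is the monotonicity of $\ell_I$ under pointwise domination of subdistributions, together with the scalar inequality. Neither is a real obstacle. I would also note that Proposition~\ref{prop:certificate} applies to the normalized lottery $D_\tau$, not to the unnormalized numerator, which is why the factor $2-m_\tau$ appears explicitly.
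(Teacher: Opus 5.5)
Your proposal is correct and follows the paper's proof essentially step for step. It uses the same ingredients: the certificate of Proposition~\ref{prop:certificate} with $\lambda=\frac{3}{2(2-m_\tau)}$, monotonicity of $\ell_I$ under deletion combined with Lemma~\ref{lem:iv-certificate} and Proposition~\ref{prop:ml-certificate}, the scalar inequality $q^2+\min\{q,1/2\}\le\frac32q$, Lemma~\ref{lem:early-mass} for the second bound, and the union-of-supports argument for $H_\tau$.
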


\begin{proof}
Deleting candidates from the Integrated-Veto subdistribution can only decrease $\ell_I$, hence Lemma~\ref{lem:iv-certificate} gives $\ell_I(p^{\IV,\ge\tau})\le q(I)^2$.
Combining this with Proposition~\ref{prop:ml-certificate}, for every nonempty proper $I$ we obtain
\[
\ell_I(D_\tau)
\le
\frac{q(I)^2+\min\{q(I),1/2\}}{2-m_\tau}.
\]
For every $q\in[0,1]$,
\begin{equation}\label{eq:scalar-32}
q^2+\min\{q,1/2\}\le\frac32q.
\end{equation}
Indeed, for $q\le1/2$ this is $q^2+q\le3q/2$, while for $q\ge1/2$ it is equivalent to $(q-1/2)(q-1)\le0$. Thus
\[
\ell_I(D_\tau)
\le
\frac{3}{2(2-m_\tau)}q(I).
\]
Proposition~\ref{prop:certificate} gives the first inequality in~\eqref{eq:D-tau-dist}; the second follows from Lemma~\ref{lem:early-mass}.

The support of $D_\tau$ is contained in the union of the support of $p^{\IV,\ge\tau}$ and the support of $p^{\ML}$. Lemma~\ref{lem:survival} bounds the former by $1+2/\tau$, while Proposition~\ref{prop:ml-support} bounds the latter by $4+\sqrt{17}$.
\end{proof}

\begin{remark}[Why deletion is necessary]\label{rem:iv-unbounded}
Integrated Veto itself does not have a uniform deterministic-distortion bound on its support. Consider two candidates $a,c$, with a fraction $1-\eta$ of voters ranking $a\succ c$ and a fraction $\eta<1/2$ ranking $c\succ a$. Candidate $c$ starts with plurality score $\eta$ and is vetoed at rate $1-\eta$, so it is eliminated at time $\eta/(1-\eta)$. Equation~\eqref{eq:iv-elim-mass} gives
\[
p^{\IV}(c)=\frac{\eta^2}{1-\eta}>0.
\]
On the consistent line metric that places $a$ and the first voter group at $0$ and $c$ and the second voter group at $1$,
\[
\frac{\SC(c)}{\SC(a)}=\frac{1-\eta}{\eta}\to\infty.
\]
Thus direct empirical sampling from $\MIV$ cannot be justified by a uniform support bound.
\end{remark}

\section{Uniform-support compression for robust lotteries}\label{sec:compression}
We now isolate the compression step from the specific construction of $D_\tau$. The results in this section apply to an arbitrary lottery whose expected distortion is bounded and whose individual support candidates have uniformly bounded deterministic distortion. The only empirical error that can increase the biased-metric integral is an upward deviation of a prefix probability, so a one-sided statistic suffices throughout. This separation between lottery-level quality and realization-level robustness is what makes the compression theorem reusable.

Fix a profile $P$ and a lottery $D$. Draw candidates $c_1,\dots,c_K$ independently from $D$, with replacement, and let $\widehat D$ be the uniform distribution on the resulting multiset. For each voter $v$, write the candidates in increasing order of preference, $a_1\prec_v a_2\prec_v\cdots\prec_v a_m$.
Let
\[
F_v(r)=D(\{a_1,\dots,a_r\}),
\qquad
\widehat F_v(r)=\widehat D(\{a_1,\dots,a_r\}),
\]
and define the one-sided prefix discrepancy
\[
\Delta_v^+=\max_{0\le r\le m}\bigl(\widehat F_v(r)-F_v(r)\bigr),
\]
where $F_v(0)=\widehat F_v(0)=0$.

\begin{lemma}\label{lem:average-prefix}
There exists a multiset of $K$ candidates from the support of $D$ whose uniform distribution $\widehat D$ satisfies
\begin{equation}\label{eq:avg-prefix}
\frac1{|V|}\sum_{v\in V}\Delta_v^+
\le d_K,
\qquad
d_K:=\sqrt{\frac{\pi}{8K}}.
\end{equation}
For every such multiset and every nonempty proper $I\subsetneq C$,
\begin{equation}\label{eq:ell-additive}
\ell_I(\widehat D)\le\ell_I(D)+d_K.
\end{equation}
\end{lemma}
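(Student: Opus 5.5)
The plan has two parts: a probabilistic existence argument for \eqref{eq:avg-prefix}, and a deterministic voter-by-voter bound that gives \eqref{eq:ell-additive}.

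\textbf{Existence.} Draw $c_1,\dots,c_K$ i.i.d.\ from $D$ and show that $\mathbb E[\Delta_v^+]\le d_K$ for each fixed voter $v$. By linearity, the expected average over voters is then at most $d_K$, so some realization achieves it. Every sampled candidate lies in $\supp(D)$ almost surely, so the resulting multiset uses only support candidates.

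To bound $\mathbb E[\Delta_v^+]$, we use the quantile transform. Order the candidates as $a_1\prec_v\cdots\prec_v a_m$ and set $c_k=a_r$ where $r$ is the least index with $U_k\le F_v(r)$; here $U_1,\dots,U_K$ are i.i.d.\ uniform on $[0,1]$. This produces the law $D$ exactly. It also gives $\widehat F_v(r)=\widehat F_K(F_v(r))$, where $\widehat F_K$ is the empirical CDF of the $U_k$. Hence
\[
\Delta_v^+=\max_r\bigl(\widehat F_K(F_v(r))-F_v(r)\bigr)\le D_K^+,
\]
and \eqref{eq:one-sided-dkw-exp} yields $\mathbb E[\Delta_v^+]\le d_K$. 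The coupling depends on $v$, but only expectations per voter are needed, so this causes no problem.

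\textbf{Additive bound.} Fix $I$ and write
\[
\ell_I(p)=\frac1{|V|}\sum_v\sum_{j\in I^c}\mathbf 1[I\succ_v j]\,p(j).
\]
For voter $v$, the set of $j$ with $I\succ_v j$ (every member of $I$ ranked above $j$) is exactly the set of candidates ranked strictly below $v$'s least preferred member of $I$. This set is a prefix $\{a_1,\dots,a_{r_v}\}$ in $v$'s increasing order, and it lies automatically in $I^c$. Therefore the voter-$v$ contribution equals $F_v(r_v)$ for $D$ and $\widehat F_v(r_v)$ for $\widehat D$. Their difference is at most $\Delta_v^+$. Averaging over voters and applying \eqref{eq:avg-prefix} gives \eqref{eq:ell-additive}.

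\textbf{Main obstacle.} The step needing care is the reduction to $D_K^+$. The empirical prefix probabilities must coincide exactly with $\widehat F_K$ evaluated at the points $F_v(r)$. Atoms of $D$ and zero-probability candidates are harmless, because the supremum over the finite set $\{F_v(r)\}$ is at most the supremum over all of $[0,1]$. The key structural fact, that $\{j: I\succ_v j\}$ is a prefix, is immediate, and everything else is linearity.
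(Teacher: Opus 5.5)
Your proposal is correct and follows the same route as the paper. You couple the i.i.d.\ sample to uniforms so that $\widehat F_v(r)$ equals the uniform empirical CDF at the breakpoints $F_v(r)$, bound $\mathbb E[\Delta_v^+]$ by $\mathbb E[D_K^+]\le d_K$, average over voters to get existence, and then use that $\{j: I\succ_v j\}$ is a prefix of $v$'s increasing order; the only cosmetic difference is that you draw the uniforms first and apply the quantile transform, whereas the paper samples candidates first and then draws an auxiliary uniform point inside each sampled candidate's interval, which gives the same coupling.
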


\begin{proof}
Fix a voter $v$. Use the interval representation of Cai et al.: partition $[0,1]$ into consecutive intervals, in the order $a_1,\dots,a_m$, with interval length $D(a_r)$. Conditional on each sampled candidate $c_s$, draw an auxiliary point $X_s$ uniformly from the interval corresponding to $c_s$. Then $X_1,\dots,X_K$ are i.i.d. uniform on $[0,1]$. Moreover, at every boundary $F_v(r)$ the empirical CDF of the $X_s$'s equals $\widehat F_v(r)$. Hence
$\Delta_v^+\le D_{K,v}^+$,
where $D_{K,v}^+$ is a copy of the one-sided Kolmogorov--Smirnov statistic. By~\eqref{eq:one-sided-dkw-exp}, $\mathbb E[\Delta_v^+]\le d_K$.
Averaging over voters and then over the random candidate sample gives
\[
\mathbb E\left[\frac1{|V|}\sum_{v\in V}\Delta_v^+\right]
\le d_K.
\]
Hence at least one realization satisfies~\eqref{eq:avg-prefix}.

Fix such a realization and a candidate set $I$. For voter $v$, let $a_{r_v}$ be the least preferred member of $I$. The event that $v$ prefers every member of $I$ to a sampled candidate $c$ is exactly $c\in\{a_1,\dots,a_{r_v-1}\}$.
Therefore its probability under $\widehat D$ exceeds its probability under $D$ by at most $\Delta_v^+$. Averaging over voters yields~\eqref{eq:ell-additive}.
\end{proof}

The same coupling gives an exact finite-sample version.

\begin{lemma}\label{lem:average-prefix-exact}
For every positive integer $K$, there exists a $K$-element multiset
of candidates from $\supp(D)$ whose uniform distribution $\widehat D$
satisfies
\[
\frac1{|V|}\sum_{v\in V}\Delta_v^+\le\kappa_K,
\]
and consequently $\ell_I(\widehat D)\le\ell_I(D)+\kappa_K$
for every nonempty proper $I\subsetneq C$.
\end{lemma}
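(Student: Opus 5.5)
The argument repeats the proof of Lemma~\ref{lem:average-prefix}, with the exact expectation $\kappa_K=\mathbb E[D_K^+]$ in place of the upper bound~\eqref{eq:one-sided-dkw-exp}. Draw $c_1,\dots,c_K$ i.i.d.\ from $D$, so every entry lies in $\supp(D)$. Fix a voter $v$ and use the same interval representation: order the candidates as $a_1\prec_v\cdots\prec_v a_m$, split $[0,1]$ into consecutive intervals of lengths $D(a_r)$, and for each $s$ draw $X_s$ uniformly from the interval of $c_s$. Then $X_1,\dots,X_K$ are i.i.d.\ uniform on $[0,1]$. At each boundary point $F_v(r)$, the empirical CDF of the $X_s$ equals $\widehat F_v(r)$. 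Hence $\widehat F_v(r)-F_v(r)$ is a value of $\widehat F_K(x)-x$ at $x=F_v(r)\in[0,1]$, which gives $\Delta_v^+\le D^+_{K,v}$ pointwise. Here $D^+_{K,v}$ has the law of $D_K^+$. The case $r=0$ gives $0\le D^+_{K,v}$, because the statistic is evaluated at $x=0$.

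Taking expectations gives $\mathbb E[\Delta_v^+]\le \mathbb E[D_K^+]=\kappa_K$ for each $v$. The coupling differs from voter to voter, but this does not matter: each voter's bound only uses that voter's marginal coupling, and linearity of expectation then gives
\[
\mathbb E\Bigl[\tfrac1{|V|}\textstyle\sum_{v}\Delta_v^+\Bigr]\le\kappa_K .
\]
Since the average is at most its expectation on some outcome, some realization of the sample (a $K$-element multiset from $\supp(D)$) satisfies the first claim.

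For the consequence, fix such a multiset and a nonempty proper $I\subsetneq C$. As in Lemma~\ref{lem:average-prefix}, let $a_{r_v}$ be $v$'s least preferred member of $I$. A candidate $c$ is ranked below every member of $I$ by $v$ exactly when $c\in\{a_1,\dots,a_{r_v-1}\}$. Therefore
\[
\ell_I(p)=\frac1{|V|}\sum_v p(\{a_1,\dots,a_{r_v-1}\})
\]
for $p\in\{D,\widehat D\}$. This holds because $\ell_I(p)=\sum_{j\notin I}s_{I\succ j}p(j)$, and every $j$ in such a prefix lies outside $I$. Each term satisfies $\widehat F_v(r_v-1)-F_v(r_v-1)\le\Delta_v^+$, and averaging over $v$ gives $\ell_I(\widehat D)\le \ell_I(D)+\kappa_K$.

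The only step needing care is the pointwise domination $\Delta_v^+\le D^+_{K,v}$. It requires that the empirical CDF at the boundary $F_v(r)$ count exactly the samples among $a_1,\dots,a_r$. Using half-open intervals $[F_v(r-1),F_v(r))$ settles boundary conventions, since these hold with probability one anyway. Zero-length intervals do not arise, because the candidates are drawn from $\supp(D)$ and so carry positive mass.
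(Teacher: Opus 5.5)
Your proposal is correct and is exactly the paper's argument: rerun the interval coupling of Lemma~\ref{lem:average-prefix} with the exact value $\kappa_K=\mathbb E[D_K^+]$ in place of the DKW bound~\eqref{eq:one-sided-dkw-exp}, then pass to $\ell_I$ via the prefix identity. Your extra care about boundary conventions and voter-dependent couplings is sound; the only wording slip is that zero-length intervals do occur in the partition (for candidates outside $\supp(D)$), and they are harmless simply because no $X_s$ is ever drawn from them.
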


\begin{proof}
In the proof of Lemma~\ref{lem:average-prefix}, replace the upper bound~\eqref{eq:one-sided-dkw-exp} by the exact expectation $\mathbb E[D_K^+]=\kappa_K$.
\end{proof}

The next lemma converts one-sided certificate error into distortion error. It is the deterministic part of the compression argument.

\begin{lemma}[One-sided certificate stability]
\label{lem:compression-general}
Fix a profile $P$. Suppose a lottery $D$ satisfies
$\dist(D,P)\le\rho$ and every candidate in $\supp(D)$ has
deterministic distortion at most $H$ on $P$. Let $\widehat D$ be the
uniform distribution on a multiset whose entries lie in $\supp(D)$.
If $\ell_I(\widehat D)\le\ell_I(D)+\delta$ for every nonempty proper
$I\subsetneq C$, then
\begin{equation}\label{eq:compression-general}
    \dist(\widehat D,P)\le\rho+(H+1)\delta.
\end{equation}
\end{lemma}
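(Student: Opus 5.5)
The plan is to work entirely in the biased-metric formulation~\eqref{eq:biased-characterization}. Fix a biased metric consistent with $P$, with optimal candidate $i^*$, parameters $(x_j)$ and $R=2\SC(i^*)$. It suffices to show
\[
L(\widehat D)\le \frac{\rho-1}{2}R+\frac{(H+1)\delta}{2}R .
\]
Since $\dist(D,P)\le\rho$, the characterization gives $L(D)\le\frac{\rho-1}{2}R$. So the whole task is to control
\[
L(\widehat D)-L(D)=\int_0^\infty\bigl(\ell(\widehat D,t)-\ell(D,t)\bigr)\,dt .
\]
The hypothesis bounds this integrand by $\delta$ pointwise in $t$, because each $I_t$ is either a nonempty proper subset or all of $C$, and $\ell_C\equiv0$. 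The obstacle is that the integral runs over an unbounded range of $t$. I therefore need a finite cutoff $M$ beyond which both integrands vanish. That cutoff is where $H$ enters.

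\textbf{Cutoff.} Let $M=\max_{c\in\supp(D)}x_c$. For $t\ge M$ every support candidate lies in $I_t$. Both $D$ and $\widehat D$ are supported in $\supp(D)$, so $\ell(D,t)=\ell(\widehat D,t)=0$ there. Hence
\[
L(\widehat D)-L(D)\le\delta M .
\]
It remains to show $M\le\frac{H+1}{2}R$, that is, $x_c\le\SC(c)+\SC(i^*)$ for every support candidate $c$.

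\textbf{Bounding $x_c$ (the key step).} This is the step I expect to need care, since the paper deliberately avoided identifying $x_j$ with a distance. I would use the explicit formulas quoted after~\eqref{eq:biased-characterization}. For a voter $v$, let $k$ attain $\min_{k:\,c\succeq_v k}x_k$. Then:
\begin{itemize}
\item $d(c,v)-d(i^*,v)=x_k$;
\item since $c\succeq_v k$, the pair $(a,b)=(c,k)$ is admissible, so $2d(i^*,v)\ge x_c-x_k$.
\end{itemize}
Adding these gives
\[
x_c\le d(c,v)+d(i^*,v).
\]
Averaging over $v$ yields $x_c\le\SC(c)+\SC(i^*)$. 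The deterministic distortion of $c$ is at most $H$, so $\SC(c)\le H\,\SC(i^*)$. Therefore
\[
x_c\le(H+1)\SC(i^*)=\frac{H+1}{2}R .
\]

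\textbf{Combining.} Putting the pieces together,
\[
L(\widehat D)\le\Bigl(\frac{\rho-1}{2}+\frac{(H+1)\delta}{2}\Bigr)R
\]
for every biased metric. By the characterization with $\lambda=\frac{\rho-1+(H+1)\delta}{2}$, this gives $\dist(\widehat D,P)\le1+2\lambda=\rho+(H+1)\delta$.

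\textbf{Degenerate case.} If $\SC(i^*)=0$, then every support candidate has zero cost, so $\widehat D$ also has zero cost. The convention on ratios then applies trivially, and this case can be dispatched separately.
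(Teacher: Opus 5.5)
Your proposal is correct and follows essentially the same route as the paper. It uses the biased-metric characterization, derives $x_c\le \SC(c)+\SC(i^*)\le (H+1)R/2$ from the admissible pair $(c,k)$, integrates the pointwise $\delta$ error over the resulting bounded threshold range, and treats the degenerate case $\SC(i^*)=0$ separately; the only cosmetic difference is that you cut off at $M=\max_{c\in\supp(D)}x_c$ rather than directly at $T=(H+1)R/2$.
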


\begin{proof}
The proof separates the roles of the two assumptions. The distortion bound on $D$ controls the baseline biased-metric integral, whereas the support-wise bound $H$ limits the range of thresholds over which the additive certificate error can accumulate. Indeed, for a biased metric with optimal candidate $i^*$, we will show that every $j\in\operatorname{supp}(D)$ satisfies $x_j\le (H+1)\operatorname{SC}(i^*)=(H+1)R/2$. Since every candidate used by $\widehat D$ lies in $\operatorname{supp}(D)$, it follows that $\ell(\widehat D,t)=0$ once $t\ge (H+1)R/2$. Below this threshold, the hypothesis incurs an additive error of at most $\delta$ at each value of $t$. Integrating this point-wise error over an interval of length $(H+1)R/2$ therefore produces an additive distortion loss of $(H+1)\delta$.

We use the biased-metric characterization~\eqref{eq:biased-characterization}.
Fix an arbitrary biased metric consistent with $P$, represented by
nonnegative values $(x_j)_{j\in C}$ with $x_{i^*}=0$, where $i^*$ is
an optimal candidate, and let $R=2\SC(i^*)$.

Suppose first that $\SC(i^*)=0$. Since every candidate
$j\in\supp(D)$ has deterministic distortion at most $H$, we have
$\SC(j)=0$ for every $j\in\supp(D)$. Every entry of the multiset
underlying $\widehat D$ lies in $\supp(D)$, so $\SC(\widehat D)=0$ and
there is nothing to prove. Assume henceforth that $R>0$.

For every $j\in\supp(D)$, the deterministic-distortion assumption
gives $\SC(j)\le H\SC(i^*)$. We next bound the biased-metric parameter
$x_j$. Fix a voter $v$ and define
$y_{j,v}:=\min_{k:\,j\succeq_v k}x_k$. By the definition of a biased
metric, $d(j,v)-d(i^*,v)=y_{j,v}$ and
$2d(i^*,v)=\max_{a,b:\,a\succeq_v b}(x_a-x_b)$. Choose
$k_{j,v}$ attaining the minimum in the definition of $y_{j,v}$.
Since $j\succeq_v k_{j,v}$, the pair $(j,k_{j,v})$ is feasible in the
preceding maximum, and hence
$2d(i^*,v)\ge x_j-y_{j,v}$. It follows that
$x_j\le 2d(i^*,v)+y_{j,v}=d(j,v)+d(i^*,v)$.

Averaging this inequality over voters gives
$x_j\le\SC(j)+\SC(i^*)\le(H+1)\SC(i^*)=(H+1)R/2$. Set
$T:=(H+1)R/2$. Since every candidate in $\supp(\widehat D)$ also lies
in $\supp(D)$, every candidate in $\supp(\widehat D)$ belongs to
$I_t=\{j\in C:x_j\le t\}$ whenever $t\ge T$. Therefore
$\ell(\widehat D,t)=0$ for all $t\ge T$.

For $0\le t<T$, the set $I_t$ is nonempty because $i^*\in I_t$. If
$I_t\subsetneq C$, the hypothesis gives
$\ell(\widehat D,t)\le\ell(D,t)+\delta$. If $I_t=C$, then both
$\ell(\widehat D,t)$ and $\ell(D,t)$ are zero, so the same inequality
holds trivially. Integrating over $[0,T]$ yields
$L(\widehat D)\le L(D)+\delta T$.

Since $\dist(D,P)\le\rho$, the biased-metric characterization gives
$L(D)\le(\rho-1)R/2$. Consequently,
$L(\widehat D)\le(\rho+(H+1)\delta-1)R/2$. As the biased metric was
arbitrary, applying the characterization once more proves
\eqref{eq:compression-general}.
\end{proof}

\begin{theorem}[Uniform-Support Compression]\label{thm:uniform-compression}
Fix a preference profile $P$. Let $D$ be any lottery such that $\dist(D,P)\le\rho$
and every candidate in $\supp(D)$ has deterministic distortion at most $H$ on $P$. Then for every positive integer $K$, there exists a $K$-element
multiset of candidates from $\supp(D)$ whose uniform lottery
$\widehat D$ satisfies
\begin{equation}\label{eq:uniform-compression-dkw}
\dist(\widehat D,P)
\le
\rho+(H+1)\sqrt{\frac{\pi}{8K}}.
\end{equation}
Moreover, there exists such a multiset satisfying the sharper finite-sample bound
\begin{equation}\label{eq:uniform-compression-exact}
\dist(\widehat D,P)
\le
\rho+(H+1)\kappa_K.
\end{equation}
In particular, the required support size is independent of $|V|$ and $|C|$.
\end{theorem}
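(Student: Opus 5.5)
The proof combines the two ingredients already established: the existence lemmas for a good sample (Lemmas~\ref{lem:average-prefix} and~\ref{lem:average-prefix-exact}) and the deterministic stability lemma (Lemma~\ref{lem:compression-general}).

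\emph{Step 1: choose the multiset.} Fix the profile $P$, the lottery $D$, and $K\ge1$. Apply Lemma~\ref{lem:average-prefix} to $D$. It gives a $K$-element multiset, drawn from $\supp(D)$, whose uniform distribution $\widehat D$ satisfies
\[
\ell_I(\widehat D)\le\ell_I(D)+d_K
\qquad\text{for every nonempty proper } I\subsetneq C,
\]
with $d_K=\sqrt{\pi/(8K)}$. The support condition matters here. The multiset is a realization of i.i.d.\ draws from $D$, so every entry automatically lies in $\supp(D)$, which is exactly what Lemma~\ref{lem:compression-general} requires.

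\emph{Step 2: convert the certificate bound into a distortion bound.} Invoke Lemma~\ref{lem:compression-general} with $\delta=d_K$. Its hypotheses are all in place: $\dist(D,P)\le\rho$, every support candidate has deterministic distortion at most $H$, the entries of $\widehat D$ lie in $\supp(D)$, and the one-sided additive certificate bound holds. The lemma then gives
\[
\dist(\widehat D,P)\le\rho+(H+1)\sqrt{\pi/(8K)},
\]
which is \eqref{eq:uniform-compression-dkw}.

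\emph{Step 3: the sharper bound.} For \eqref{eq:uniform-compression-exact}, repeat Steps 1 and 2 with Lemma~\ref{lem:average-prefix-exact} in place of Lemma~\ref{lem:average-prefix}, so that $\delta=\kappa_K$.

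\emph{Step 4: independence from $|V|$ and $|C|$.} Both $d_K$ and $\kappa_K$ depend only on $K$, because they are bounds on (or values of) the expected one-sided Kolmogorov--Smirnov statistic for $K$ uniform samples. So the required list size does not depend on the numbers of voters or candidates.

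\emph{Main obstacle.} The one genuinely delicate point is already settled in the earlier lemmas: the same multiset must work for all sets $I$ and all voters simultaneously. This is handled by averaging $\Delta_v^+$ over voters before choosing a realization, so that one multiset controls every $\ell_I$ at once. I expect no further difficulty in assembling the theorem.
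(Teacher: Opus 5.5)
Your proposal is correct and follows the paper's proof: the paper likewise takes the multiset from Lemma~\ref{lem:average-prefix} (respectively Lemma~\ref{lem:average-prefix-exact}) and applies Lemma~\ref{lem:compression-general} with $\delta=\sqrt{\pi/(8K)}$ (respectively $\delta=\kappa_K$). Your observation that the entries lie in $\supp(D)$, as Lemma~\ref{lem:compression-general} requires, is exactly the point the paper relies on.
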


\begin{proof}
By Lemma~\ref{lem:average-prefix}, there is a $K$-element multiset for which the hypotheses of Lemma~\ref{lem:compression-general} hold with $\delta=\sqrt{\pi/(8K)}$, proving~\eqref{eq:uniform-compression-dkw}. Lemma~\ref{lem:average-prefix-exact} gives another multiset for which the same hypotheses hold with $\delta=\kappa_K$, proving~\eqref{eq:uniform-compression-exact}.
\end{proof}

\begin{corollary}[Robust randomized rules admit bounded-randomness approximations]\label{cor:robust-transfer}
Suppose a randomized voting rule maps every profile $P$ to a lottery $D_P$ such that, for some constants $\rho$ and $H$ independent of the profile,
\[
\dist(D_P,P)\le\rho
\qquad\text{and}\qquad
\sup_{j\in\supp(D_P)}\dist(\delta_j,P)\le H.
\]
Then for every $\eps>0$ there exists a bounded-randomness voting rule of distortion at most $\rho+\eps$ that uniformly randomizes over a deterministically identified multiset of size
\begin{equation}\label{eq:robust-transfer-K}
K\ge
\left\lceil\frac{\pi(H+1)^2}{8\eps^2}\right\rceil.
\end{equation}
\end{corollary}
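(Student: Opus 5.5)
The plan is to apply the Uniform-Support Compression Theorem (Theorem~\ref{thm:uniform-compression}) profile by profile, choose $K$ so that the compression loss is at most $\eps$, and then make the choice of multiset deterministic. Fix a profile $P$ and the lottery $D_P$ given by the rule. By hypothesis $\dist(D_P,P)\le\rho$ and every $j\in\supp(D_P)$ satisfies $\dist(\delta_j,P)\le H$. So for every positive integer $K$, Theorem~\ref{thm:uniform-compression} yields a $K$-element multiset $S_P\subseteq\supp(D_P)$ whose uniform lottery $\widehat D_P$ satisfies
\[
\dist(\widehat D_P,P)\le\rho+(H+1)\sqrt{\frac{\pi}{8K}} .
\]

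Next comes the arithmetic on $K$. If $K\ge\lceil\pi(H+1)^2/(8\eps^2)\rceil$, then $K\ge\pi(H+1)^2/(8\eps^2)$. Hence $\sqrt{\pi/(8K)}\le\eps/(H+1)$, and the bound above becomes $\rho+\eps$. The constants $\rho$ and $H$ do not depend on $P$, so a single $K$ works for every profile. It is also independent of $|V|$ and $|C|$.

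The only remaining point is that the rule must \emph{deterministically} identify the multiset. Theorem~\ref{thm:uniform-compression} guarantees only existence, via a probabilistic argument. On each profile, the set of $K$-element multisets of candidates is finite, so the set of multisets attaining the bound is a nonempty finite set. The rule therefore fixes an arbitrary total order on multisets, for instance lexicographic with respect to a fixed labeling of $C$, and outputs the first qualifying multiset $S_P$. This gives a well-defined deterministic map $P\mapsto S_P$.

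Composing this map with uniform selection from $S_P$ gives a bounded-randomness rule. Its distortion is the supremum over profiles of $\dist(\widehat D_P,P)\le\rho+\eps$. I expect no step to be a genuine obstacle, since the substance is already in Lemma~\ref{lem:average-prefix} and Lemma~\ref{lem:compression-general}.

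The one point needing care is the degenerate convention for distortion when $\min_i\SC(i)=0$. This is handled inside Lemma~\ref{lem:compression-general}, because the support condition forces every entry of $S_P$ to have zero cost. Computational efficiency is not claimed in the statement, so existence plus a fixed tie-breaking rule suffices.
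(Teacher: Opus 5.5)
Your proposal is correct and follows the paper's proof: apply Theorem~\ref{thm:uniform-compression} on each profile, use $K\ge\lceil\pi(H+1)^2/(8\eps^2)\rceil$ to make $(H+1)\sqrt{\pi/(8K)}\le\eps$, and make the rule deterministic by outputting the first qualifying $K$-element multiset in a fixed lexicographic order.
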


\begin{proof}
Apply Theorem~\ref{thm:uniform-compression} profile by profile and choose, under a fixed lexicographic order, the first $K$-element multiset satisfying~\eqref{eq:uniform-compression-dkw}. The choice is deterministic. The bound~\eqref{eq:robust-transfer-K} makes the additive term in~\eqref{eq:uniform-compression-dkw} at most $\eps$.
\end{proof}

\begin{corollary}[Direct compression of the $2.75271$ benchmark]\label{cor:crww-direct}
There exists an absolute constant $C_{\mathrm{CRWW}}>0$ such that, for every positive integer $K$, there is a bounded-randomness voting rule with distortion at most
\begin{equation}\label{eq:crww-direct}
2.75271+\frac{C_{\mathrm{CRWW}}}{\sqrt K}.
\end{equation}
Consequently, $O(\eps^{-2})$ list entries suffice to preserve the $2.75271$ benchmark up to additive error $\eps$.
\end{corollary}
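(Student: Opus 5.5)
The plan is to apply Corollary~\ref{cor:robust-transfer} to the rule of Charikar, Ramakrishnan, Wang, and Wu~\cite{CRWW24}. That rule maps each profile $P$ to a lottery $D_P$ with $\dist(D_P,P)\le 2.75271$. The only remaining hypothesis is a profile-independent bound $H_{\mathrm{CRWW}}$ on the deterministic distortion of every candidate in $\supp(D_P)$. As noted in the introduction, Cai et al.~\cite{CGRW26} observe exactly this.

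If I had to justify that bound myself, I would use the structure of the CRWW lottery. It is a mixture, with profile-independent weights, of a Maximal Lottery and finitely many auxiliary lotteries.
\begin{itemize}
\item The Maximal-Lottery part is handled by Proposition~\ref{prop:ml-support}, which gives the bound $4+\sqrt{17}$.
\item For each auxiliary component with positive weight, I would derive a bound on each supported candidate $j$ of the form $s_{I\succ j}\le \lambda q(I)$ for every nonempty proper $I\subsetneq C$. Proposition~\ref{prop:certificate} then converts this into deterministic distortion at most $1+2\lambda$, in the same way as Lemma~\ref{lem:survival}.
\end{itemize}
This step is the main obstacle. Rather than re-derive it, I would cite the observation of Cai et al.\ and let $H_{\mathrm{CRWW}}$ be the resulting absolute constant.

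With $\rho=2.75271$ and $H=H_{\mathrm{CRWW}}$, fix $K$. On each profile, Theorem~\ref{thm:uniform-compression} gives a $K$-element multiset from $\supp(D_P)$ whose uniform lottery has distortion at most
\[
2.75271+(H_{\mathrm{CRWW}}+1)\sqrt{\pi/(8K)}.
\]
I would choose the lexicographically first such multiset, as in the proof of Corollary~\ref{cor:robust-transfer}. This makes the rule deterministic in its choice of list. Setting $C_{\mathrm{CRWW}}=(H_{\mathrm{CRWW}}+1)\sqrt{\pi/8}$ then yields~\eqref{eq:crww-direct} for every $K\ge1$.

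For the consequence, I would take $K\ge C_{\mathrm{CRWW}}^2/\eps^2$. This makes the additive term at most $\eps$, so $K=O(\eps^{-2})$ entries suffice. Equivalently, this is~\eqref{eq:robust-transfer-K} with $H=H_{\mathrm{CRWW}}$.
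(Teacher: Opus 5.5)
Your proposal is correct and matches the paper's proof. Both cite Cai et al.\ for the absolute support bound \(H_{\mathrm{CRWW}}\), apply Theorem~\ref{thm:uniform-compression} with \(\rho=2.75271\), and set \(C_{\mathrm{CRWW}}=(H_{\mathrm{CRWW}}+1)\sqrt{\pi/8}\). Your sketch of how one might derive \(H_{\mathrm{CRWW}}\) from the structure of the CRWW lottery is not load-bearing, since you ultimately rely on the same cited observation.
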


\begin{proof}
Cai et al.~\cite{CGRW26} observe, using their realization-wise support bounds, that the $2.75271$-distortion rule of Charikar, Ramakrishnan, Wang, and Wu~\cite{CRWW24} has an absolute bound $H_{\mathrm{CRWW}}$ on the deterministic distortion of every candidate that can appear in its support. Apply Theorem~\ref{thm:uniform-compression} with $\rho=2.75271$ and set
\[
C_{\mathrm{CRWW}}=(H_{\mathrm{CRWW}}+1)\sqrt{\frac{\pi}{8}}.
\]
\end{proof}

\begin{corollary}[Robustification--compression exponent law]\label{cor:rate-law}
Fix constants $a,b,A,B>0$ and a benchmark $\rho$. Suppose that, for every sufficiently small $\tau>0$, a randomized voting rule deterministically associates with each preference profile $P$ a lottery $D_{\tau,P}$ satisfying
\[
\dist(D_{\tau,P},P)\le \rho+A\tau^a,
\qquad
\sup_{j\in\supp(D_{\tau,P})}\dist(\delta_j,P)\le B\tau^{-b}.
\]
Then for all sufficiently large $K$ there exists a bounded-randomness rule using a uniform $K$-element multiset and having distortion
\begin{equation}\label{eq:rate-law}
\rho+O\left(K^{-\frac{a}{2(a+b)}}\right).
\end{equation}
More explicitly, choosing $\tau=K^{-1/(2(a+b))}$ gives
\[
\dist\le\rho+\left(A+(B+1)\sqrt{\frac{\pi}{8}}\right)K^{-\frac{a}{2(a+b)}}
\]
for all sufficiently large $K$.
\end{corollary}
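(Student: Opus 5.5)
We prove Corollary~\ref{cor:rate-law} by applying Theorem~\ref{thm:uniform-compression} profile by profile to the source lottery $D_{\tau,P}$, with $\tau$ chosen as a function of $K$ alone.

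\textbf{Step 1: compress for fixed $\tau$.} Fix $\tau>0$ small enough that the hypotheses hold. For each profile $P$, invoke Theorem~\ref{thm:uniform-compression} with $D=D_{\tau,P}$, $\rho_P=\rho+A\tau^a$ and $H=B\tau^{-b}$. This yields a $K$-element multiset from $\supp(D_{\tau,P})$ whose uniform lottery satisfies
\[
\dist(\widehat D,P)\le \rho+A\tau^a+(B\tau^{-b}+1)\sqrt{\frac{\pi}{8K}}.
\]
As in Corollary~\ref{cor:robust-transfer}, we make the rule deterministic by choosing the lexicographically first multiset that satisfies this bound. Such a multiset exists, and the profile has finitely many candidates, so the choice is well defined. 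Since $\tau$ depends only on $K$, and the association $P\mapsto D_{\tau,P}$ is deterministic, the overall map from profiles to multisets is deterministic.

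\textbf{Step 2: choose $\tau$ to balance the terms.} Set $\tau=K^{-1/(2(a+b))}$. Then
\[
\tau^a=K^{-a/(2(a+b))},
\qquad
\tau^{-b}K^{-1/2}=K^{b/(2(a+b))-1/2}=K^{-a/(2(a+b))}.
\]
The remaining term is $\sqrt{\pi/(8K)}=\sqrt{\pi/8}\,K^{-1/2}$. Because $a/(2(a+b))<1/2$ and $K\ge1$, we have $K^{-1/2}\le K^{-a/(2(a+b))}$. Summing the three terms gives the total excess
\[
\Bigl(A+(B+1)\sqrt{\pi/8}\Bigr)K^{-a/(2(a+b))},
\]
which is the explicit form of~\eqref{eq:rate-law}.

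\textbf{Step 3: interpret ``sufficiently large $K$''.} The hypotheses hold only for $\tau\le\tau_0$. Since $\tau=K^{-1/(2(a+b))}\to0$ as $K\to\infty$, the condition $\tau\le\tau_0$ holds for all $K\ge\tau_0^{-2(a+b)}$.

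The argument has no real obstacle; it is an exponent calculation. The only points needing care are the determinism of the selection (Step~1) and the absorption of the $K^{-1/2}$ term into the dominant rate (Step~2).
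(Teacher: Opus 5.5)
Your proposal is correct and follows the paper's proof essentially verbatim: apply Theorem~\ref{thm:uniform-compression} to $D_{\tau,P}$ with $H=B\tau^{-b}$, set $\tau=K^{-1/(2(a+b))}$ so that $\tau^a$ and $\tau^{-b}K^{-1/2}$ both equal $K^{-a/(2(a+b))}$, and absorb the $\sqrt{\pi/8}\,K^{-1/2}$ term using $a/(2(a+b))<1/2$. Your lexicographic selection for determinism and your threshold $K\ge\tau_0^{-2(a+b)}$ only spell out details that the paper's proof leaves implicit; the determinism point is handled the same way in Corollary~\ref{cor:robust-transfer}.
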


\begin{proof}
Apply Theorem~\ref{thm:uniform-compression} to $D_{\tau,P}$. With $c=\sqrt{\pi/8}$, the additive loss is at most
\[
A\tau^a+c(B\tau^{-b}+1)K^{-1/2}.
\]
Set $\tau=K^{-1/(2(a+b))}$. The first term and the $B$-part of the second term are both proportional to $K^{-a/(2(a+b))}$, while $K^{-1/2}\le K^{-a/(2(a+b))}$ because $a/(2(a+b))<1/2$. This yields the displayed bound.
\end{proof}

Theorem~\ref{thm:uniform-compression} itself incurs only an $O(K^{-1/2})$ compression loss for fixed $H$. Corollary~\ref{cor:crww-direct} illustrates this direct-compression regime. The slower $O(K^{-1/3})$ rate in our main application is not intrinsic to compression; it arises because Mixed Integrated Veto must first be robustified, with support bound $H(\tau)=O(1/\tau)$ and approximation loss $O(\tau^2)$. Corollary~\ref{cor:rate-law} with $(a,b)=(2,1)$ then gives the exponent $1/3$.

\begin{remark}[What the theorem does and does not transfer]\label{rem:no-unconditional-transfer}
Corollary~\ref{cor:robust-transfer} is a conditional transfer principle, not a statement that every unrestricted $\rho$-distortion rule automatically yields bounded-randomness distortion $\rho+\eps$. The uniform bound $H$ on the deterministic distortion of supported candidates is an additional hypothesis. More generally, if a rule can first be \emph{robustified}: for each parameter $\tau$ one can construct a lottery $D_\tau$ with
\[
\dist(D_\tau,P)\le \rho+g(\tau),
\qquad
\sup_{j\in\supp(D_\tau)}\dist(\delta_j,P)\le H(\tau),
\]
then Theorem~\ref{thm:uniform-compression} gives
\begin{equation}\label{eq:robustify-compress-meta}
\dist(\widehat D_{\tau,K},P)
\le
\rho+g(\tau)+(H(\tau)+1)\sqrt{\frac{\pi}{8K}}.
\end{equation}
Thus any family with $g(\tau)\to0$ and finite $H(\tau)$ for each fixed $\tau$ yields a bounded-randomness $\rho+\eps$ approximation after choosing $\tau$ and then $K$. Our treatment of Mixed Integrated Veto is an instance of exactly this robustify-then-compress principle.
\end{remark}

\section{Proof of the main theorems}\label{sec:main}

Combining Proposition~\ref{prop:D-tau} with the Uniform-Support Compression Theorem~\ref{thm:uniform-compression} gives the main quantitative inequality.

\begin{theorem}\label{thm:master}
For every $\tau\in(0,1]$ and every positive integer $K$, there exists a uniform distribution on a deterministically identified multiset of $K$ candidates whose metric distortion is at most
\begin{equation}\label{eq:master-bound}
1+\frac{3}{2-\tau^2}
+
\left(
1+
\max\left\{4+\sqrt{17},1+\frac2\tau\right\}
\right)
\sqrt{\frac{\pi}{8K}}.
\end{equation}
The same statement holds with $\sqrt{\pi/(8K)}$ replaced by $\kappa_K$.
\end{theorem}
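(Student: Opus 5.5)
The plan is to apply the Uniform-Support Compression Theorem~\ref{thm:uniform-compression} profile by profile to the robustified lottery $D_\tau$ of~\eqref{eq:D-tau}, and then make the choice of multiset deterministic.

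First, fix $\tau\in(0,1]$ and a profile $P$, and form $D_\tau$. This requires computing the Integrated-Veto process and fixing one Maximal Lottery $p^{\ML}$. To make the map $P\mapsto D_\tau$ a function of $P$ alone, I will select $p^{\ML}$ by a fixed tie-breaking rule, for instance the lexicographically first optimal vertex of the Condorcet-game linear program. Then Proposition~\ref{prop:D-tau} gives the bound $\dist(D_\tau,P)\le\rho:=1+3/(2-\tau^2)$. It also shows that every candidate in $\supp(D_\tau)$ has deterministic distortion at most $H_\tau$ from~\eqref{eq:Htau}. Both bounds hold uniformly over all metrics consistent with $P$, which is exactly the hypothesis of Theorem~\ref{thm:uniform-compression}.

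Second, apply Theorem~\ref{thm:uniform-compression} with this $\rho$ and $H=H_\tau$. It yields a $K$-element multiset from $\supp(D_\tau)$ whose uniform lottery has distortion at most $\rho+(H_\tau+1)\sqrt{\pi/(8K)}$. Substituting $\rho$ and $H_\tau$ gives precisely~\eqref{eq:master-bound}. Using the sharper bound~\eqref{eq:uniform-compression-exact} instead gives the version with $\kappa_K$.

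Third, to get a deterministically identified multiset, I will argue as in Corollary~\ref{cor:robust-transfer}. Among the finitely many $K$-multisets over $C$, choose the lexicographically first one whose uniform lottery satisfies the averaged prefix bound of Lemma~\ref{lem:average-prefix}, or of Lemma~\ref{lem:average-prefix-exact} for the $\kappa_K$ version. That condition is checkable from $P$ and $D_\tau$ alone, and Lemma~\ref{lem:compression-general} then converts it into the distortion bound.

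The main subtlety, mild as it is, lies in this selection step. We should not select by the distortion value itself, which involves a supremum over metrics. Selecting via the checkable prefix-discrepancy criterion avoids this, though in any case existence plus a fixed order already suffices for a well-defined rule. Everything else is direct substitution.
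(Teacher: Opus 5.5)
Your route is the paper's: fix a deterministically tie-broken Maximal Lottery, apply Proposition~\ref{prop:D-tau} and Theorem~\ref{thm:uniform-compression} to $D_\tau$, and make the choice deterministic through the checkable average prefix discrepancy. (The paper takes a minimizer of that quantity under a fixed order; your threshold version would serve equally well.) But your third step has a concrete error: you search over all $K$-multisets over $C$, and the averaged prefix bound does not force the entries into $\supp(D_\tau)$. Lemma~\ref{lem:compression-general} needs exactly that hypothesis. It is what gives $x_j\le(H+1)R/2$ for every candidate used by $\widehat D$, and hence bounds the range of $t$ over which the additive error accumulates. Without it the criterion does not control distortion.

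Here is a counterexample. Take the two-candidate profile of Remark~\ref{rem:iv-unbounded} with $\eta/(1-\eta)<\tau$. Then $c$ is deleted and $a$ is a Condorcet winner, so $D_\tau=\delta_a$. The multiset with one copy of $c$ and $K-1$ copies of $a$ has average one-sided discrepancy $(1-\eta)/K\le\sqrt{\pi/(8K)}$ for every $K\ge3$. Yet on the line metric of that remark its distortion is at least $\frac{K-1}{K}+\frac{1-\eta}{K\eta}$, which is unbounded as $\eta\to0$. Under the lexicographic order on sorted $K$-tuples with $c$ listed before $a$, the first multiset meeting your criterion contains at least one copy of $c$. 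So the rule you wrote down has unbounded distortion.

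The repair is to enumerate only $K$-multisets with entries in $\supp(D_\tau)$, as the paper does. Lemma~\ref{lem:average-prefix} (respectively Lemma~\ref{lem:average-prefix-exact}) guarantees that some such multiset meets the criterion, and Lemma~\ref{lem:compression-general} then applies. Your closing fallback is also valid: choose the first multiset in a fixed order whose distortion on $P$ is at most~\eqref{eq:master-bound}. The theorem asks only for a well-defined deterministic selection, not an efficient one, and this is exactly how Corollary~\ref{cor:robust-transfer} proceeds. So the statement survives, but the primary selection rule as you wrote it does not.
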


\begin{proof}
For a fixed profile, Proposition~\ref{prop:D-tau} gives
\[
\dist(D_\tau,P)\le1+\frac3{2-\tau^2}
\]
and bounds the deterministic distortion of every candidate in $\supp(D_\tau)$ by $H_\tau$ from~\eqref{eq:Htau}. Theorem~\ref{thm:uniform-compression} gives both displayed bounds.

It remains only to make the choice of the empirical multiset
deterministic. Fix deterministic tie-breaking for the Maximal
Lottery. For every $K$-element multiset $M$ whose entries lie in
$\supp(D_\tau)$, let $\widehat D_M$ be its uniform distribution and
compute its average one-sided prefix discrepancy relative to
$D_\tau$. Choose, using a fixed lexicographic order, a multiset
minimizing this quantity. Lemma~\ref{lem:average-prefix} (or Lemma~\ref{lem:average-prefix-exact}) shows that the
minimum satisfies the required bound.
\end{proof}

\begin{proof}[Proof of Theorem~\ref{thm:main-K}]
Let
$
d_K=\sqrt{\frac{\pi}{8K}}$, and 
$
\tau=d_K^{1/3}=\left(\frac{\pi}{8K}\right)^{1/6}$.
For $K\ge802$, $\tau\le \frac{2}{3+\sqrt{17}}$,
so $1+2/\tau\ge4+\sqrt{17}$. Moreover $\tau^2<1/2$, and hence
\begin{align*}
1+\frac3{2-\tau^2}-\frac52
&=\frac{3\tau^2}{2(2-\tau^2)}
\le\tau^2.
\end{align*}
Applying Theorem~\ref{thm:master},
\begin{align*}
\dist\le \frac52+\tau^2+
\left(2+\frac2\tau\right)d_K=\frac52+3d_K^{2/3}+2d_K=\frac52
+3\left(\frac{\pi}{8K}\right)^{1/3}
+2\sqrt{\frac{\pi}{8K}}.
\end{align*}
\end{proof}

\begin{proof}[Proof of Theorem~\ref{thm:main-eps}]
Let
$
\bar\eps=\min\{\eps,1/5\}$, and
$
\tau=\sqrt{\frac{\bar\eps}{3}}$.
Then $\tau<2/(3+\sqrt{17})$, so $H_\tau=1+2/\tau$. Furthermore,
\begin{align*}
1+\frac3{2-\tau^2}-\frac52
=\frac{3\tau^2}{2(2-\tau^2)}=\frac{\bar\eps}{2(2-\bar\eps/3)}
\le\frac{\bar\eps}{3}.
\end{align*}
Choose
\begin{equation}\label{eq:K-epsilon}
K\ge
\left\lceil
\frac{27\pi}{8\bar\eps^3}
\left(1+\sqrt{\frac{\bar\eps}{3}}\right)^2
\right\rceil.
\end{equation}
Then $\sqrt{\frac{\pi}{8K}}
\le
\frac{\bar\eps\tau}{3(1+\tau)}$,
and therefore
\begin{align*}
(H_\tau+1)\sqrt{\frac{\pi}{8K}}
=\left(2+\frac2\tau\right)
\sqrt{\frac{\pi}{8K}}\le\frac{2\bar\eps}{3}.
\end{align*}
Theorem~\ref{thm:master} now gives
\[
\dist
\le
\frac52+\frac{\bar\eps}{3}+\frac{2\bar\eps}{3}
=\frac52+\bar\eps
\le\frac52+\eps.
\]
The bound~\eqref{eq:K-epsilon} is $O(\eps^{-3})$ as $\eps\downarrow0$.

For each fixed $\eps$, $K$ is an absolute constant. A Maximal Lottery is computable by linear programming, the Integrated Veto trajectory has at most $|C|$ phases, and the deterministic empirical multiset can be found by enumerating all $K$-element multisets. This takes $|C|^{O(K)}\operatorname{poly}(|V|,|C|)$ time, which is polynomial for every fixed $\eps$.
\end{proof}

\subsection{The list size 164}\label{sec:164}

We now use the exact finite-sample expectation $\kappa_K$ rather than the general one-sided DKW upper bound. Let
\begin{equation}\label{eq:tau-star}
\tau_*=\frac{2}{3+\sqrt{17}}=\frac{\sqrt{17}-3}{4}.
\end{equation}
At this value, $1+\frac2{\tau_*}=4+\sqrt{17}$,
so $H_{\tau_*}=4+\sqrt{17}$. A direct calculation also gives
$1+\frac3{2-\tau_*^2}=\frac{1+\sqrt{17}}2$.
Thus the exact version of Theorem~\ref{thm:master} yields
\begin{equation}\label{eq:164-master}
\dist
\le
\frac{1+\sqrt{17}}2
+(5+\sqrt{17})\kappa_K.
\end{equation}

The following elementary finite calculation is the only numerical input.

\begin{lemma}\label{lem:kappa164}
For the one-sided Kolmogorov--Smirnov statistic with $164$ samples,
\[
\kappa_{164}<\frac{23971}{500000}=0.047942.
\]
\end{lemma}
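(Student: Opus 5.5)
We would compute $\kappa_{164}=\int_0^1\Pr[D_{164}^+\ge x]\,dx$ directly from Smirnov's formula~\eqref{eq:smirnov-tail}. The integrand is a finite sum of polynomials in $x$ on each interval $x\in\bigl((K-j_0-1)/K,(K-j_0)/K\bigr]$, on which the upper summation index $\lfloor K(1-x)\rfloor$ is constant. Each term $x\binom Kj(x+j/K)^{j-1}(1-x-j/K)^{K-j}$ is nonnegative and contributes only on $0<x\le 1-j/K$. Exchanging sum and integral therefore gives
\[
\kappa_K=\sum_{j=0}^{K-1}\binom Kj\int_0^{1-j/K}x\Bigl(x+\frac jK\Bigr)^{j-1}\Bigl(1-x-\frac jK\Bigr)^{K-j}dx .
\]
The $j=0$ term must be read with the stated convention and equals $\int_0^1(1-x)^K dx=1/(K+1)$. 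For $j\ge1$ we substitute $u=x+j/K$, so $u\in[j/K,1]$ and $x=u-j/K$. Each integral becomes $\int_{j/K}^1 (u-j/K)u^{j-1}(1-u)^{K-j}\,du$, which is a difference of incomplete Beta integrals with rational endpoints. It can be expanded into a finite rational expression by the binomial theorem, or evaluated by repeated integration by parts. Thus $\kappa_{164}$ is an explicit rational number, and the lemma is a certified finite computation.

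The plan in order is as follows. First, justify the termwise integration as above and record the closed form of each summand. Second, evaluate the sum in exact rational arithmetic. Alternatively, use interval arithmetic with outward rounding, which avoids huge numerators and still yields a rigorous upper bound. Third, compare the result with $23971/500000$.

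As an independent sanity check, we would use the Birnbaum--Tingey identity. It writes the same tail as $x\sum_j\binom Kj(1-x-j/K)^{K-j}(x+j/K)^{j-1}$, matching~\eqref{eq:smirnov-tail}. We would also use the asymptotic expansion $\kappa_K\approx\sqrt{\pi/(8K)}-\tfrac{1}{6K}+\cdots$. For $K=164$ this gives about $0.04893-0.00102\approx 0.0479$, consistent with the claimed bound. It also shows the margin is small, roughly in the fifth decimal place.

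The main obstacle is rigor at that margin. Naive floating-point summation of $164$ terms with binomials up to $\binom{164}{82}$ and alternating expansions would suffer cancellation. We would therefore avoid expanding $(u-j/K)$ against $(1-u)^{K-j}$ into alternating sums. Instead we would integrate each nonnegative term numerically with certified bounds, either by exact rational evaluation via a computer algebra system or by Gauss--Legendre quadrature on each term with a rigorous error bound. Either route should certify $\kappa_{164}<0.047942$. Combined with~\eqref{eq:164-master}, the bound is then $\frac{1+\sqrt{17}}2+(5+\sqrt{17})\cdot0.047942<2.5616+0.4374<3$.
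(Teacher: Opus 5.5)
Your proposal is correct and follows essentially the paper's route. You integrate Smirnov's tail formula termwise, with the $j$th term supported on $0<x\le 1-j/K$ and the $j=0$ term giving $1/(K+1)$; you reduce each term to an explicit rational number; and you compare the exact rational value of $\kappa_{164}$ with $23971/500000$.

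Two small remarks. The paper substitutes $x=(1-j/n)u$ and expands $(j/n+(1-j/n)u)^{j-1}$, which makes every term a \emph{positive} combination of complete Beta integrals and removes your cancellation worry. Also, your asymptotic estimate overstates the slack: $\kappa_{164}=0.0479417562\ldots$ sits only about $2.4\times 10^{-7}$ below the bound, not in the fifth decimal place. So any interval-arithmetic or quadrature alternative must certify errors at that level, while exact rational arithmetic is unaffected.
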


\begin{proof}
Integrate Smirnov's exact tail formula~\eqref{eq:smirnov-tail} and interchange the finite sum and the integral. For $n\ge1$ this gives
\begin{align}
\kappa_n
&=\frac1{n+1}
+\sum_{j=1}^{n-1}\binom nj
\left(\frac{n-j}{n}\right)^{n-j+2}
\sum_{p=0}^{j-1}\binom{j-1}{p}
\left(\frac jn\right)^{j-1-p}
\left(\frac{n-j}{n}\right)^p
\frac{(p+1)!(n-j)!}{(p+n-j+2)!}.
\label{eq:kappa-rational}
\end{align}
For completeness,~\eqref{eq:kappa-rational} follows by observing that the $j$th summand of~\eqref{eq:smirnov-tail} is present for $0\le x\le1-j/n$, substituting $x=(1-j/n)u$, expanding $(j/n+(1-j/n)u)^{j-1}$, and using the beta integral
\[
\int_0^1u^{p+1}(1-u)^{n-j}\,du
=\frac{(p+1)!(n-j)!}{(p+n-j+2)!}.
\]
Every term in~\eqref{eq:kappa-rational} is rational. Substituting $n=164$ and evaluating the finite rational sum exactly gives
\[
\kappa_{164}=0.0479417562236\ldots
<\frac{23971}{500000}.
\]
No floating-point approximation is needed for the inequality: the displayed finite sum can be evaluated in rational arithmetic and compared directly with $23971/500000$.
\end{proof}

\begin{proof}[Proof of Theorem~\ref{thm:164}]
Apply~\eqref{eq:164-master} with $K=164$ and Lemma~\ref{lem:kappa164}. Direct comparison gives
\[
\frac{23971}{500000}
<
\frac{5-\sqrt{17}}{2(5+\sqrt{17})}
=\frac{21-5\sqrt{17}}8.
\]
Hence
\[
\frac{1+\sqrt{17}}2
+(5+\sqrt{17})\kappa_{164}<3.
\]
Thus a uniformly random entry from a deterministically identified $164$-element multiset has distortion strictly below $3$. Since deterministic rules cannot beat $3$, we obtain $2\le N^\star\le164$.
\end{proof}

\section{Discussion}\label{sec:discussion}

The Uniform-Support Compression Theorem identifies realization-wise
robustness as the key condition for benchmark-preserving compression.
If every candidate supported by a source lottery has bounded
deterministic distortion, the lottery incurs only
\(O(K^{-1/2})\) compression loss, yielding
\(2.75271+O(K^{-1/2})\) for the rule of Charikar et
al.~\cite{CRWW24}. Mixed Integrated Veto has the sharper benchmark
\(5/2\) but does not satisfy this robustness condition. Truncating
early eliminations incurs \(O(\tau^2)\) distortion loss and gives a
support-wise distortion bound of \(O(1/\tau)\). Balancing this with
the \(O(1/(\tau\sqrt K))\) compression error yields
\(\frac52+O(K^{-1/3})\), or \(K=O(\eps^{-3})\) entries for distortion
\(5/2+\eps\). More generally, Corollary~\ref{cor:rate-law} shows that
robustification loss \(O(\tau^a)\) and a support-wise distortion bound
\(O(\tau^{-b})\) yield convergence rate
\(O\left(K^{-\frac{a}{2(a+b)}}\right)\).

For finite lists, Smirnov's exact finite-sample expectation improves
the general one-sided DKW bound from \(171\) entries to \(164\), giving
\(2\le N^\star\le164\). Together with the lower bound
\(\rho_K\ge 1+\frac{4}{K+1}\)~\cite{Pet26,CLR25}, this highlights the
distinction between the smallest list beating distortion \(3\), the
rate at which a benchmark can be preserved, and the asymptotically
optimal bounded-randomness distortion. In particular, our results
imply \(\rho_{\mathrm{BR}}\le\frac52\), but do not establish equality
with the unrestricted optimum. The guarantees also transfer to
committee selection with repeated seats, but not directly to
committees whose members must be distinct.

Several questions remain open. These include whether \(N^\star=2\),
whether some fixed finite list achieves distortion at most \(5/2\),
and whether comparable guarantees admit more efficient constructions
than enumerating \(K\)-element multisets. It is also natural to ask
which other randomized rules admit useful robustifications and
whether the compression framework extends to committee models without
repeated winners.

\section*{AI-Use Disclosure}
The central idea of the Uniform-Support Compression Theorem was developed genuinely by the authors. The application to the \(2.75271\)-distortion rule was likewise identified by the authors as a direct consequence of this compression principle.

After the authors recognized the relevance of the recent \(5/2\)-distortion result for Mixed Integrated Veto, OpenAI’s ChatGPT was used in exploring how that lottery might be made compatible with the compression theorem. ChatGPT suggested the idea of first robustifying the source lottery by removing outcomes with poor individual guarantees. The authors subsequently developed, formalized, and verified the resulting robustify-then-compress construction and its mathematical analysis.

ChatGPT was also used to assist with organizing the arguments and drafting and revising portions of the proof exposition. We thank ChatGPT for this assistance. All mathematical statements, calculations, and proofs were independently checked by the authors, who take full responsibility for the content of the paper.
\bibliographystyle{abbrv}
\bibliography{references}

\end{document}